\newcommand{\abs}[1]{\left\lvert #1 \right\rvert}
\newcommand{\ang}[1]{\left\langle #1 \right\rangle}
\newcommand{\eff}{\mathbb{F}}
\newcommand{\rnum}{\mathbb{R}}
\newcommand{\znum}{\mathbb{Z}}
\newcommand{\numberthis}{\addtocounter{equation}{1}\tag{\theequation}}
\newtheorem{theorem}[subsection]{Theorem}
\newtheorem{question}[subsection]{Question}
\newtheorem{lemma}[subsection]{Lemma}
\newtheorem{proposition}[subsection]{Proposition}
\newtheorem{corollary}[subsection]{Corollary}
\theoremstyle{definition}
\newtheorem{definition}[subsection]{Definition}
\newcommand{\ar}{\mathrm{AR}}
\newcommand{\gr}{\mathrm{GR}}
\newcommand{\sr}{\mathrm{SR}}
\newcommand{\pr}{\mathrm{PR}}
\newcommand{\tr}{\mathrm{TR}}
\newcommand{\rank}{\mathrm{rank}}
\DeclareMathOperator{\F}{\mathbb{F}}
\DeclareMathOperator{\E}{\mathbb{E}}
\DeclareMathOperator{\matn}{\langle n \rangle}
\newcommand{\cala}{\mathcal{A}}
\newcommand{\calm}{\mathcal{M}}
\newcommand{\K}{\mathbb{K}}
\newcommand{\bias}{\mathrm{bias}}
\newcommand{\codim}{\mathrm{codim}}
\begin{document}

\title{Notions of Tensor Rank}
\author{Mandar Juvekar}
\address{Boston University, Boston, MA}
\email{mandarj@bu.edu}

\author{Arian Nadjimzadah}
\address{UCLA, Los Angeles, CA}
\email{anad@math.ucla.edu}

\date{September 2022; Revised June 2023}

\begin{abstract}
    Tensors, or multi-linear forms, are important objects in a variety of areas from analytics, to combinatorics, to computational complexity theory.
    Notions of tensor rank aim to quantify the ``complexity'' of these forms, and are thus also important.
    While there is one single definition of rank that completely captures the complexity of matrices (and thus linear transformations), there is no definitive analog for tensors.
    Rather, many notions of tensor rank have been defined over the years, each with their own set of uses.
    In this paper we survey the popular notions of tensor rank.
    We give a brief history of their introduction, motivating their existence, and discuss some of their applications in computer science.
    We also give proof sketches of recent results by Lovett, and Cohen and Moshkovitz, which prove asymptotic equivalence between three key notions of tensor rank over finite fields with at least three elements.
\end{abstract}

\maketitle

\section{Introduction} \label{sec:intro}
We first come across the notion of rank in a course on linear algebra. If $A$ is an $m \times n$ matrix over a field $\F$, the rank of $A$ is the dimension of the space spanned by its rows (or columns). To help generalize this definition, it will be useful to reinterpret the matrix $A$ as a bilinear form $T: \F^m \times \F^n \to \F$ by the natural correspondence
\begin{equation*}
    (x,y) \in \F^m \times \F^n \mapsto T(x,y) = \sum_{i,j} A_{i,j} x^i y^j.
\end{equation*}
As a notational convention, here and elsewhere in this paper we will use superscripts to index coordinates of a vector, and subscripts to index over different vectors. For example, if $x_1, x_2, \ldots$ are vectors, $x_5^7$ would be the seventh coordinate of the fifth vector.
Getting back to ranks, using this correspondence, we can formulate an alternative definition of rank where the rank of $T$ is the minimal natural number $r$ such that $T$ can be written as a sum of $r$ bilinear forms of ``lowest'' complexity, or \textit{rank 1 matrices}.
The natural objects of lowest complexity are the linear 1-forms, i.e., the dot product with a fixed vector.
Products of 1-forms $T_1(x)T_2(y)$ are then bilinear forms.
Since every bilinear form can be written as a finite sum of forms of this type, we set them to be our rank 1 bilinear forms (matrices).
Putting everything together, our alternative definition then says that the rank of a matrix $A$ is the minimum natural number $r$ such that the bilinear form $T$ corresponding to $A$ can be written as the sum of $r$ bilinear forms each of the type $T_1(x)T_2(y)$ where $T_1$ and $T_2$ are 1-forms.
This alternative definition agrees with the usual linear algebra definition of rank (see for instance \cite[Sections 32 and 51]{halmos:fin-dim}).

In this paper we focus on generalizing the notion of rank to higher-dimensional analogs of matrices. What are these higher-dimensional matrices?
The analogs in higher dimensions that we will look at are the \textit{$d$-tensors}.
\begin{definition}[$d$-tensor]
    Let $\F$ be a field. A \emph{$d$-tensor} $T$ on $\F^{n_1} \times \cdots \times \F^{n_d}$ is a multilinear map $T : \F^{n_1} \times \cdots \times \F^{n_d} \to \F$.
\end{definition}
Note that a $d$-tensor $T : \F^{n_1} \times \cdots \times \F^{n_d} \to \F$ can naturally be identified with a $d$-dimensional array $M$ such that
\begin{equation*}
    T(x_1, \ldots x_d) = \sum_{i_1 \in [n_1], \ldots, i_d \in [n_d]} M_{i_1, \ldots, i_d} x_1^{i_1} \cdots x_d^{i_d}.
\end{equation*}
So, following the convention for matrices, we will often denote the space of $d$-tensors on $\F^{n_1} \times \cdots \times \F^{n_d}$ by $\F^{n_1 \times \cdots \times n_d}$.\footnote{Here and elsewhere in the paper we will use $[n]$ to denote the set $\{1, 2, \ldots, n\}$.}

The definition of rank given above generalizes well to arbitrary $d$-tensors.
Setting the rank 1 $d$-tensors to products of $d$ linear 1-forms (or, since linear 1-forms are the same as 1-tensors, $d$ 1-tensors) leads to the notion of rank that is traditionally associated with tensors. We call it \emph{traditional rank}, or $\tr$ for short.
\begin{definition}[Traditional Rank] \label{def:tr}
    A $d$-tensor $T$ has \emph{traditional rank 1} if we can write
    \begin{equation*}
        T(x_1, \ldots, x_d) = T_1(x_1)T_2(x_2)\cdots T_d(x_d),
    \end{equation*}
    where each $T_j$ is a 1-tensor.

    The \emph{traditional rank} of an arbitrary $d$-tensor $T$, $\tr(T)$, is the minimum number $r$ such that we can write
    \begin{equation*}
        T(x) = \sum_{i=1}^r T_i(x)
    \end{equation*}
    where each $T_i$ is a $d$-tensor with traditional rank 1.
\end{definition}

While this definition does generalize matrix rank it turns out that there are also other, non-equivalent generalizations of matrix rank to arbitrary tensors that characterize the combinatorial, analytic, and geometric properties of tensors.
Each of these notions is useful in its own way.
Finding tight relationships between these notions remains an open research question.

One disadvantage of the traditional notion of tensor rank is that it is prohibitively hard to compute in general.
Since H{\aa}stad's work in 1989~\cite{has:tens-np} it has been known that computing traditional tensor rank over finite fields is NP-complete, and over $\mathbb Q$ is NP-hard.
In 2013, Hillar and Lim~\cite{hil-lim:most-tensor} showed that H{\aa}stad's proof could be modified to show that traditional rank is NP-hard over $\rnum$ and $\mathbb C$ as well.
Traditional rank also turns out to be NP-hard to approximate with arbitrarily small error bounds~\cite{swer:tens-hard}.
The non-traditional notions we will discuss are less ``stringent'' than traditional rank, so it is possible that they are easier to compute.
Showing (exact or approximate) hardness results for those notions, however, is still an open problem.

In this paper we introduce the landscape of notions of tensor rank, motivating their existence (Section~\ref{sec:ages}).
We then give examples of applications of these notions in computational complexity theory (Section~\ref{sec:apps}).
In Section~\ref{sec:triv} we give a rundown of the trivial relationships between the notions that come from their definitions.
Sections~\ref{sec:lov} and \ref{sec:mosh} give proof sketches for results by Lovett~\cite{lov:arank-applications} and Cohen and Moshkovitz~\cite{coh-mos:struct-bilin} respectively which, put together, prove asymptotic equivalence between three key notions of rank over finite fields with three or more elements.

\section{Tensor Rank Through the Ages} \label{sec:ages}

The oldest non-traditional notion of tensor rank is the so-called \textit{analytic rank}.
\begin{definition}[Analytic Rank] \label{def:ar}
    Let $\F$ be a finite field.
    The \emph{bias} of a $d$-tensor $T \in \F^{n_1 \times \cdots \times n_d}$ is given by
    \begin{equation*}
        \bias(T) = \E_{(x_1,\ldots, x_{d}) \in \F^{n_1}\times \cdots \times \F^{n_d}} \chi(T(x_1,\ldots, x_d)),
    \end{equation*}
    where $\chi$ is a nontrivial additive character which, in the case that $\F = \F_p$, can be taken to be
    $\chi(x) = e^{2\pi i x/p}$.

    The \emph{analytic rank} of $T$, denoted by $\ar(T)$, is then given by
    \begin{equation*}
        \ar(T) = -\log_{\abs{\F}} \bias(T).
    \end{equation*}

\end{definition}
This measure of the complexity of a tensor was first introduced by Gowers and Wolf in the context of higher-order Fourier analysis~\cite{gowers-wolf}. In Section \ref{sec:lov} we will see the connection between the analytic rank and the other combinatorial and geometric notions described below.

Going back to the traditional definition of tensor rank, there is no reason why one cannot use other kinds of tensors as rank 1 tensors, as long as the new notion agrees with the old in the case of matrices (in order to be a true generalization).
Doing so, it turns out, is useful for a whole host of applications.
The first such ``alternative'' notion was introduced in 2016 in the context of the so-called ``capset problem.''

A \textit{capset} in $\eff_3$ is a subset of $\eff_3$ with no non-trivial three-term arithmetic progressions.
That is, a capset is a set $A \subseteq \eff_3$ that does not contain $\{x, x + r, x + 2r\}$ for any $x, r \in \eff_3$ with $r \neq 0$.
The capset problem asks what the maximum size of a capset in $\eff_3$ can be. In the spring of 2016, Croot, Lev, and Pach~\cite{cro-lev-pac:prog-free} proved a breakthrough result for a similar problem in the additive group $\znum/4\znum$.
They showed that if $A \subseteq (\znum/4\znum)^n$ contains no non-trivial three-term arithmetic progressions, then $\abs{A} \leq 3.60172^n$.
Soon after, Ellenberg and Gijswijt~\cite{ell-gij:large-subsets} generalized the Croot-Lev-Pach argument to show that any progression-free set $A \subseteq (\znum/p\znum)^n$, where $p$ is a prime, satisfies $\abs{A} \leq (J(p)p)^n$ where $J(p)$ is an explicit constant less than 1.
This exponentially improved the known bound for the capset problem, bringing it down from $O(3^n/n^{1+c})$ (where $c$ is some absolute constant) to $O(2.756^n)$.
Tao~\cite{tao:symm-form} reformulated the Ellenberg-Gijswijt argument in terms of tensors, introducing what is now known as the \textit{slice rank} of a tensor.

\begin{definition}[Slice Rank] \label{def:sr}
    A $d$-tensor $T$ has \emph{slice rank 1} if we can write
    \begin{equation*}
        T(x_1, \ldots, x_d) = T_1(x_i) T_2(x_j : j \neq i),
    \end{equation*}
    where $i \in [d]$, $T_1$ is a 1-tensor, and $T_2$ is a $(d-1)$-tensor.

    The \emph{slice rank} of an arbitrary $d$-tensor $T$, $\sr(T)$, is the minimum number $r$ such that we can write
    \begin{equation*}
        T(x) = \sum_{i=1}^r T_i(x)
    \end{equation*}
    where each $T_i$ is a $d$-tensor with slice rank 1.
\end{definition}

Notice that in essence this definition just modifies Definition~\ref{def:tr} so that our rank 1 tensors go from being products of $d$ 1-tensors to being the product of two tensors: one of order 1 and one of order $d-1$.
Also notice that in the case of matrices (which are 2-tensors), Definitions~\ref{def:sr} and \ref{def:tr} are exactly the same and agree with the usual definition of matrix rank.

Slice rank, as we defined above, defines its rank 1 tensors by ``slicing'' off one coordinate and multiplying a 1-tensor applied to that coordinate with a $(d-1)$-tensor applied to the remaining coordinates.
A natural extension to this is to, instead of slicing off one coordinate, allow arbitrary partitions of the coordinates into two parts.
This was done in 2017 (preprint; published in 2020) by Naslund~\cite{nas:part-rank}, who called this new notion of rank \textit{partition rank}.
We formally define partition rank as follows.

\begin{definition}[Partition Rank] \label{def:pr}
    A $d$-tensor $T$ has \emph{partition rank 1} if we can write
    \begin{equation*}
        T(x_1, \ldots, x_d) = T_1(x_i : i \in S) T_2(x_j : j \notin S),
    \end{equation*}
    where $S \subset [d]$ with $1 \leq \abs{S} < d$, $T_1$ is a $\abs{S}$-tensor, and $T_2$ is a $(d - \abs{S})$-tensor.

    The \emph{partition rank} of an arbitrary $d$-tensor $T$, $\pr(T)$, is the minimum number $r$ such that we can write
    \begin{equation*}
        T(x) = \sum_{i=1}^r T_i(x)
    \end{equation*}
    where each $T_i$ is a $d$-tensor with partition rank 1.
\end{definition}

Using this new notion of rank, Naslund showed that any set $A \subseteq \eff_q^n$ with size at least ${n + (k-1)q \choose (k-1)(q-1)}$ must have distinct vectors $x_1, x_2, \ldots, x_{k+1}$ such that the vectors $x_1 - x_{k+1}, x_2 - x_{k+1}, \ldots, x_k - x_{k+1}$ are mutually orthogonal (he called a collection of such $x_1, \ldots, x_{k+1}$ a \textit{$k$-right corner}).

Our last, most recent notion of rank is motivated by an age-old, fundamental question in computer science: asymptotically how many scalar additions and multiplications are necessary to multiply two $n \times n$ matrices?
Tensors are naturally connected to this problem because the operation of matrix multiplication itself can be thought of as a tensor.

Recall that matrices $M$ (2-tensors) can simultaneously be thought of as bilinear forms $T: \F^m \times \F^n \to \F$ that map $(x, y) \mapsto x^T M y$ and linear maps $T': \F^m \to \F^n$ that map $x \mapsto Mx$.
Similarly, we can also think of $d$-tensors
\begin{equation*}
T(x_1, \ldots x_d) = \sum_{i_1 \in [n_1], \ldots, i_d \in [n_d]} T_{i_1, \ldots, i_d} x_1^{i_1} \cdots x_d^{i_d}
\end{equation*}
as $(d-1)$-linear maps $T': \F^{n_1} \times \cdots \times \F^{n_{d-1}} \to \F^{n_d}$ given by
\begin{equation*}
    [T'(x_1,\ldots, x_{d-1})]_k =
    \sum_{i_1 \in [n_1],\ldots, i_{d-1}\in [n_{d-1}]} T_{i_1, \ldots, i_{d-1}, k} x_1^{i_1} \cdots x_{d-1}^{i_{d-1}}.
\end{equation*}

By the usual formula for matrix multiplication, if $M$ and $N$ are $n \times n$ matrices over $\F$, then $MN$ is an $n \times n$ matrix with
\begin{equation*}
    (MN)_{ij} = \sum_{\ell = 1}^n M_{i\ell} N_{\ell j}.
\end{equation*}
Identifying the space of $n \times n$ matrices with $\F^{n^2}$, the operation of matrix multiplication takes two elements of $\F^{n^2}$ to another element of $\F^{n^2}$ where each entry is a bilinear form applied to $M$ and $N$.
Thus matrix multiplication is a 3-tensor over the space of $n \times n$ matrices.
As we discuss in Section~\ref{sec:apps} the rank of this tensor is intimately related to the computational complexity of matrix multiplication.

Motivated by this application (as well as a few others), Kopparty, Moshkovitz, and Zuiddam~\cite{kop-mos-zui:gr-subrank} recently introduced the notion of \textit{geometric rank}.
Unlike the other types of ranks discussed, this notion of rank does not aim to capture the combinatorial or analytic properties of the tensor.
Rather, it looks at the geometric properties of the tensor, defining rank as the codimension of an algebraic variety.
Kopparty, Moshkovitz, and Zuiddam used this new notion of rank to prove tight bounds about the subrank of the matrix multiplication tensor (the subrank is a quantity related to the rank that is useful in the computational complexity analysis for matrix multiplication).

The geometric rank of a tensor is formally defined as follows.
\begin{definition}[Geometric Rank] \label{def:gr}
    Let $T \in \F^{n_1 \times \cdots \times n_d}$ be a $d$-tensor with $d \geq 2$. The \emph{geometric rank} of $T$ is
    \begin{align*}
        \gr(T) = \codim\{(x_1, \ldots, x_{d-1}) &\in \F^{n_1} \times \cdots \times \F^{n_{d-1}} \\
        &: \forall z \in \F^{n_d}, \ T(x_1, \ldots, x_{d-1}, z) = 0\}.
    \end{align*}
\end{definition}

Here we use the usual definition of the codimension of an algebraic variety.
If $V \subseteq \F^n$ is an algebraic variety (that is possibly reducible), the dimension of $V$, $\dim(V)$, is defined to be the length of a maximal chain of irreducible subvarieties of $V$. The codimension, $\codim(V)$, is then defined to be $n-~\dim(V)$. For more detailed explanations of these concepts we recommend looking at~\cite{harris:alg-geom}.

Geometric rank---as Kopparty, Moshkovitz, and Zuiddam mention without proof---coincides with the linear algebra definition of matrix rank when $d = 2$.
We give a quick proof of that here.

\begin{proposition}
    Let $M \in \F^{m\times n}$ be a matrix (a 2-tensor), and let $r$ be the rank of $M$ as per the linear algebra definition.
    Then $r = \gr(M)$.
\end{proposition}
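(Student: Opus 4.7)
The plan is to identify the variety in the definition of $\gr(M)$ as the left null space of $M$, and then to observe that for a linear subspace the algebraic-geometric dimension agrees with the vector-space dimension.

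With $d = 2$, the variety in question is
\[
V = \{x \in \F^m : \forall z \in \F^n,\, M(x, z) = 0\}.
\]
Since $M(x, z) = x^T M z$, requiring this to vanish for every $z \in \F^n$ is the same as requiring $M^T x = 0$, so $V$ is exactly the left null space of $M$. By the linear-algebraic rank--nullity theorem, $V$ has vector-space dimension $m - r$.

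Next I would argue that, viewed as an algebraic variety, $V$ still has dimension $m - r$. Being a linear subspace of vector-space dimension $m - r$, $V$ is isomorphic as a variety to affine space $\F^{m-r}$, whose algebraic-geometric dimension equals $m - r$---a standard fact I would cite from \cite{harris:alg-geom}. To make this concrete, one can exhibit an explicit maximal chain of irreducible subvarieties by fixing a basis $v_1, \ldots, v_{m-r}$ of $V$ and taking
\[
\{0\} \subsetneq \mathrm{span}(v_1) \subsetneq \cdots \subsetneq \mathrm{span}(v_1, \ldots, v_{m-r}) = V,
\]
each term of which is a linear subspace and therefore irreducible. Hence $\dim V = m - r$ and $\codim V = m - (m - r) = r = \gr(M)$.

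The only nontrivial step is the identification of the algebraic-geometric dimension of a linear subspace with its vector-space dimension; since this is standard material, I would cite it rather than reprove it. Everything else is a routine unwinding of Definition~\ref{def:gr} together with rank--nullity.
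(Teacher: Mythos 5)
Your proof is correct and follows essentially the same route as the paper: identify the variety in Definition~\ref{def:gr} as the left null space of $M$ and apply rank--nullity. The only difference is that you explicitly justify (and cite) the fact that the algebraic-geometric dimension of a linear subspace equals its vector-space dimension, a step the paper's proof leaves implicit.
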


\begin{proof}
    By definition of geometric rank,
    \begin{align*}
        \gr(M) &= \codim\{x \in \F^m : \forall y \in \F^n x^T M y = 0\} \\
        &= \codim\{x \in \F^m : x^T M = \mathbf{0}^T\} \\
        &= m - \dim\{x \in \F^m : x^T M = \mathbf{0}^T\},
    \end{align*}
    where $\bf{0}$ is the zero vector in $\F^m$.
    Notice that $\dim\{x \in \F^m : x^T M \equiv \mathbf{0}^T\}$ is just the nullity of $M$, and so by the rank nullity theorem we are done.
\end{proof}

There is a slight caveat to Definition~\ref{def:gr}, which is that it implicitly assumes $\F$ to be algebraically closed.
This, however, is not a problem because in a general field we can extend this definition via the embedding of the field into its algebraic closure.

\section{Applications in Computer Science} \label{sec:apps}

Uttering the words ``tensor rank'' to a computer scientist will, with high probability, elicit a response that contains the phrase ``matrix multiplication.''
And for good reason; notions of tensor rank happen to be intimately related to the complexity and efficient computation of matrix multiplication.

Multiplying two matrices is a fundamental computation with applications in nearly every field computers are used in.
It is thus not surprising that the following question is of great interest.
\begin{question}[Arithmetic Complexity of Matrix Multiplication] \label{q:ac-mm}
    Given two $n \times n$ matrices $A$ and $B$ over a field $\F$, what is the minimum number of addition and scalar multiplication operations needed to compute the $n \times n$ product matrix $AB$ where
    \begin{equation} \label{eq:mm}
        (AB)_{ij} = \sum_{\ell = 1}^n A_{i\ell} B_{\ell j}?
    \end{equation}
    Can one find an algorithm that actually computes the product with that number of operations?
\end{question}

The number of operations in Question~\ref{q:ac-mm} is known as the \textit{arithmetic complexity} of $n \times n$ matrix multiplication.
Let us call this number $\cala(n)$.
An exact determination of $\cala(n)$ seems to be outside the range of methods available at the present time, so much of the work around this has been focused on getting asymptotic bounds on arithmetic complexity.
It is then useful to define the following quantity.

\begin{definition}[Exponent of Matrix Multiplication] \label{def:exponent}
    The \emph{exponent of matrix multiplication}, $\omega$, is defined as
    \begin{equation*}
        \omega = \inf\{\tau \in \rnum : \cala(n) = O(n^\tau)\}.
    \end{equation*}
    Equivalently,
    \begin{align*}
         \omega = \inf\{\tau : \exists &\text{ an algorithm to compute } \\
         &n \times n \text{ MM with } O(n^\tau) \text{ operations}\}.
    \end{align*}
\end{definition}

It is easy to see that $\omega \in [2,3]$.
The upper bound comes from the fact that the na\"ive algorithm of computing each entry of the product using Equation~\ref{eq:mm} is $O(n^3)$.
The lower bound holds because any algorithm for matrix multiplication would have to make at least $n^2$ computations since the output is an $n \times n$ matrix.
A long-standing conjecture in algebraic complexity theory is that $\omega$ is in fact equal to 2.

As mentioned in Section~\ref{sec:ages}, the operation of matrix multiplication is itself a 3-tensor.
It is convention to denote the tensor to multiply a $k \times m$ matrix and an $m \times n$ matrix with $\ang{k,m,n}$.
Since we are only focusing on square matrices, we will focus on the tensors $\ang{n,n,n}$, which we will often shorten to $\matn$.
Seeing as rank is supposed to encode the complexity of a tensor, it is not surprising that it is closely connected to $\cala(n)$.
The following result is known to hold.

\begin{theorem} Over any field, we have \label{t:mm-rank}
    \begin{equation*}
        \omega = \inf\{\tau \in \rnum : \tr(\matn) = O(n^\tau)\}.
    \end{equation*}
\end{theorem}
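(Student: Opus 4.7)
\medskip
\noindent\textbf{Proof plan.} Write $\omega_R = \inf\{\tau \in \rnum : \tr(\matn) = O(n^\tau)\}$; the plan is to show $\omega \le \omega_R$ and $\omega_R \le \omega$ separately. The upper bound is algorithmic: turn a rank decomposition into a recursive matrix multiplication algorithm. The lower bound is structural: turn an arithmetic circuit into a bilinear (i.e.\ tensor-rank) decomposition.

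\medskip
\noindent\textbf{Upper bound ($\omega \le \omega_R$).} Fix $m$ and suppose $\tr(\ang{m,m,m}) = r$. Unpacking the definition of traditional rank and identifying $\ang{m,m,m}$ with the trilinear form $(A,B,C) \mapsto \mathrm{tr}(ABC)$ on $m\times m$ matrices, a rank $r$ decomposition expresses each entry of the product $AB$ as a linear combination of $r$ ``essential'' bilinear products $p_i = \alpha_i(A)\beta_i(B)$ where $\alpha_i,\beta_i$ are linear forms. I would apply this decomposition \emph{block-recursively}: view an $m^k\times m^k$ matrix as an $m\times m$ block matrix with $m^{k-1}\times m^{k-1}$ blocks, and use the decomposition in the outer layer, producing $r$ recursive calls of size $m^{k-1}$ together with $O(rm^{2k})$ scalar additions to form the $\alpha_i(A)$, $\beta_i(B)$ and to combine the block products. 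This gives the recurrence
\begin{equation*}
    \cala(m^k) \le r \cdot \cala(m^{k-1}) + O(r m^{2k}),
\end{equation*}
which solves (assuming $r > m^2$, which we may ensure by taking $m$ large) to $\cala(m^k) = O(r^k) = O((m^k)^{\log_m r})$. Padding zero rows/columns extends the bound to non-powers of $m$, so $\omega \le \log_m r$ for every $m$ of interest. Given any $\tau > \omega_R$, there is a sequence of $m$'s with $\tr(\ang{m,m,m}) \le C m^\tau$, and then $\omega \le \log_m(Cm^\tau) = \tau + o(1)$ as $m\to\infty$; letting $\tau \downarrow \omega_R$ yields $\omega \le \omega_R$. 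Under the hood, this direction is essentially the submultiplicativity $\tr(T_1 \otimes T_2) \le \tr(T_1)\tr(T_2)$ applied to $\ang{m,m,m}^{\otimes k} = \ang{m^k,m^k,m^k}$.

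\medskip
\noindent\textbf{Lower bound ($\omega_R \le \omega$) and main obstacle.} Here I would need: any arithmetic circuit computing the map $(A,B)\mapsto AB$ using $\cala(n)$ additions and scalar multiplications yields a decomposition of the tensor $\matn$ of traditional rank $O(\cala(n))$, so $\tr(\matn) \le O(\cala(n))$ and thus $\omega_R \le \omega$. This is the genuinely hard step. The classical route, due to Strassen and refined by Winograd and others, is a structural normal form theorem: any division-free straight-line program computing a bilinear map $\F^a \times \F^b \to \F^c$ can be transformed, with only a constant-factor overhead in the number of operations, into a \emph{bilinear} program that first forms linear combinations of the entries of $A$, linear combinations of the entries of $B$, multiplies them pairwise to get some number $s$ of ``essential'' products, and then takes linear combinations of those products to produce the outputs. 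The minimal such $s$ is exactly the traditional rank of the associated tensor, so $\tr(\matn)$ is dominated by the total operation count of any circuit. I expect the \emph{main obstacle} to be precisely this reduction from general arithmetic circuits to bilinear normal form: the upper bound is essentially a one-line recursion once submultiplicativity is available, whereas the lower bound contains the nontrivial mathematical content and requires a careful analysis tracking how each multiplication in the circuit can be homogenized and ``aligned'' with the bilinear structure of matrix multiplication without blowing up the operation count.
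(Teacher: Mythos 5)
The paper does not prove Theorem~\ref{t:mm-rank} itself but simply defers to \cite[Proposition 15.1]{bur-cla-sho:alg-compl}, and your proposal reconstructs essentially that standard argument: the upper bound by applying a rank decomposition of $\ang{m,m,m}$ block-recursively (equivalently, submultiplicativity of $\tr$ on tensor powers), and the lower bound by Strassen's reduction of arbitrary straight-line programs to bilinear normal form, so that $\tr(\matn) = O(\cala(n))$. You also correctly locate the genuine content in that lower-bound step; in the paper this ingredient surfaces only as the quoted inequality $\frac{1}{2}\tr(\matn) \leq \calm(n) \leq \tr(\matn)$ (with $\calm(n) \leq \cala(n)$), which is precisely the normal-form fact you describe.
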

\begin{proof}
    See \cite[Proposition 15.1]{bur-cla-sho:alg-compl}
\end{proof}

An important application of Theorem~\ref{t:mm-rank} is in showing that the exponent of matrix multiplication does not change under another important, related measure of complexity called the \textit{multiplicative complexity} of matrix multiplication.
The multiplicative complexity of matrix multiplication is defined as the minimum number of scalar multiplication operations required for $n \times n$ matrix multiplication (the algorithm is of course also allowed additions).
Let us call this quantity $\calm(n)$.
It has been shown that $\calm$ relates to traditional tensor rank in the following way (see, for instance, \cite[Section 14.1]{bur-cla-sho:alg-compl}).
\begin{equation*}
    \frac{1}{2}\tr(\matn) \leq \calm(n) \leq \tr(\matn).
\end{equation*}
From this inequality it follows that $\tr(\matn) = O(n^\tau)$ implies $\calm(n) = O(n^\tau)$ and $\calm(n) = O(n^\tau)$ implies $\tr(\matn) = O(n^\tau)$.
Combining this with Theorem~\ref{t:mm-rank} gives us $\omega = \inf\{\tau \in \rnum : \calm(n) = O(n^\tau)\}$.

The connection in Theorem~\ref{t:mm-rank} also implies that upper and lower bounds on $\tr$ directly translate to upper and lower bounds on the exponent of matrix multiplication.
The following result makes this explicit.

\begin{theorem}[\cite{bla:fast-mm}] \label{t:mm-tr-ub}
    If $\tr(\ang{k,m,n}) \leq r$, then $\omega \leq \log_{kmn} r$.
\end{theorem}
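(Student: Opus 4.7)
The plan is to reduce rectangular matrix multiplication to the square case handled by Theorem~\ref{t:mm-rank}, by exploiting two structural properties of the family $\{\ang{k,m,n}\}$ together with the submultiplicativity of traditional rank under tensor product.

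I would begin by setting up three ingredients. (i) Traditional rank is submultiplicative under tensor product: $\tr(T \otimes T') \leq \tr(T)\,\tr(T')$, since the outer product of two rank-$1$ terms is again rank-$1$. (ii) Matrix multiplication tensors combine cleanly under tensor product,
\begin{equation*}
    \ang{k_1,m_1,n_1} \otimes \ang{k_2,m_2,n_2} \;\cong\; \ang{k_1k_2,\, m_1m_2,\, n_1n_2},
\end{equation*}
which is the tensorial formulation of the fact that block matrix multiplication factors into blockwise and entrywise multiplications. (iii) The cyclic symmetry $\tr(\ang{k,m,n}) = \tr(\ang{m,n,k}) = \tr(\ang{n,k,m})$, obtained by realizing $\ang{k,m,n}$ as the trilinear form $(A,B,C) \mapsto \sum_{i,\ell,j} A_{i\ell} B_{\ell j} C_{ji}$ and noting that this expression is invariant under cyclic permutation of $(A,B,C)$.

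Assembling these pieces: starting from $\tr(\ang{k,m,n}) \leq r$, (iii) gives the same bound for the two cyclic permutations, and then (i)--(ii) combine them to produce $\tr(\ang{N,N,N}) \leq r^3$ with $N := kmn$. Iterating the tensor product $s$ times then gives $\tr(\ang{N^s,N^s,N^s}) \leq r^{3s} = (N^s)^{3 \log_N r}$, which is the cube-root-symmetric form of the claimed bound.

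Finally, to pass from the sparse sequence $n = N^s$ to all $n$, I would apply a standard padding argument: given an arbitrary $n$, take $s$ minimal with $N^s \geq n$, embed the inputs into $N^s \times N^s$ matrices by padding with zero rows and columns, and observe that $N^s \leq Nn$ so the extra factor $N$ is a constant that is absorbed by the $O(\cdot)$. This yields $\tr(\ang{n,n,n}) = O\!\left(n^{3\log_N r}\right)$ for all $n$, and Theorem~\ref{t:mm-rank} then delivers the claimed upper bound on $\omega$. The conceptual heart of the argument is the cyclic symmetry (iii); I do not expect any serious obstacles, as the remaining steps are routine tensor and asymptotic manipulations.
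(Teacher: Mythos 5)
The paper offers no proof of this theorem at all --- it is quoted from \cite{bla:fast-mm} --- so there is nothing internal to compare against; but your argument is precisely the standard proof of the cited result, and it is sound. The three ingredients (submultiplicativity of $\tr$ under Kronecker product, $\ang{k_1,m_1,n_1} \otimes \ang{k_2,m_2,n_2} \cong \ang{k_1k_2,m_1m_2,n_1n_2}$, and the cyclic symmetry of the trilinear form $\sum_{i,\ell,j} A_{i\ell}B_{\ell j}C_{ji}$) are all correct, the combination giving $\tr(\ang{N,N,N}) \leq r^3$ with $N = kmn$ is right, and the zero-padding step to pass from $n = N^s$ to arbitrary $n$ loses only a constant factor, so Theorem~\ref{t:mm-rank} applies as you say.

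One point you should state explicitly rather than gloss over: what your argument establishes is $\omega \leq 3\log_{kmn} r$ (equivalently $(kmn)^{\omega/3} \leq r$), not the bound $\omega \leq \log_{kmn} r$ as the theorem is literally written. The literal statement cannot be correct: combined with Strassen's $\tr(\ang{2}) \leq 7$ it would give $\omega \leq \log_8 7 < 1$, contradicting $\omega \geq 2$. The missing factor of $3$ is evidently a typo in the statement --- the paper itself applies the theorem in the form $\omega \leq 3\log_8 7 = \log_2 7$ --- and your derivation, whose exponent is $3\log_N r$, proves exactly this corrected version. So rather than saying the last step ``delivers the claimed upper bound,'' note that it delivers $\omega \leq 3\log_{kmn} r$, which is the form the paper actually uses.
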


In his seminal work, Strassen~\cite{str:gauss} showed that square matrices of size 2 can be multiplied with seven multiplication operations (as opposed to eight using the na\"ive method) at the cost of a few more addition operations.
It follows from his work that $\tr(\ang{2}) \leq 7$.
Plugging this into Theorem~\ref{t:mm-tr-ub} tells us that $\omega \leq 3 \log_{8} 7 = \log_2 7 = 2.801... \leq 2.81$.
Hopcroft and Kerr~\cite{hop-ker:min-num-mult} and Winograd~\cite{win:on-mult} later (independently) showed that $2 \times 2$ multiplication is not possible with just six multiplications, which implies that $\tr(\ang{2})$ is in fact equal to 7.
Thus $\log_2 7$ is the sharpest bound one can get from Theorem~\ref{t:mm-tr-ub} using $\ang{2}$.
It turns out that $\tr(\ang{70}) \leq 143.640$~\cite{pan:fast-matrix}, which gives a slightly better bound of 2.80.
Whether other tensors can lead to sharper bounds remains an open question.

Strassen~\cite{str:asymp-spectr}, using more traditional-rank-related techniques, was able to prove that $\omega < 2.48$.
Coppersmith and Winograd~\cite{cop-win:matrix-mult}, using a construction of arithmetic-progression-free sets, showed that $\omega \leq 2.375...$.
Starting in 2010, by analyzing higher-order variants of the Coppersmith-Winograd construction, Stothers~\cite{stothers:diss}, then Vassilevska Willians~\cite{wil:mult-matr}, and then Le Gall~\cite{legall:powers-tensors} made incremental improvements.
This led to the current best upper bound of $\omega < 2.372...$.

The connections between matrix multiplication and tensor rank do not end here.
As mentioned in the previous paragraph, Coppersmith and Winograd's proof involved a construction of arithmetic-progression-free sets.
This suggests that the complexity of matrix multiplication might be related to the capset problem and Tao's slice rank.
And indeed, that did turn out to be the case.
In 2003, Cohn and Umans~\cite{CohnUmans} described a framework for proving upper bounds on $\omega$ that involves reducing matrix multiplication to group algebra multiplication.
In 2012, Alon, Shpilka, and Umans~\cite{alon2012sunflowers} proved relations between various then-open conjectures in combinatorics and bounds on $\omega$.
The resolution of the capset problem in 2016 settled some of the conjectures involved in the 2012 work.
That led to the 2017 work of Blasiak et al.~\cite{Blasiak_2017}, which used the resolution of the capset problem to rule out obtaining an $\omega = 2$ using a subclass of Cohn-Umans-style constructions.
In doing so, they extended the capset result, making extensive use of the notion of slice ranks.

Finally, the introduction of the geometric rank was in part motivated by the complexity of matrix multiplication.
Kopparty, Moshkovitz, and Zuiddam~\cite{kop-mos-zui:gr-subrank} used the notion of geometric rank to prove a tight upper bound on the so-called \textit{border subrank} of the matrix multiplication tensor, which matched a known lower bound.
While the exact definitions of subrank are beyond the scope of this survey, we recommend reading \cite{kop-mos-zui:gr-subrank} for further details.

\section{Some Trivial Relationships} \label{sec:triv}
Before getting to the deeper connections between the different notions of rank, we state some consequences obvious from the definitions. Any rank 1 form used in Definition \ref{def:sr},
\begin{equation*}
    T_1(x_1)T_2(x_j : j\neq i),
\end{equation*}
is in particular a rank 1 form of the form in Definition \ref{def:pr},
\begin{equation}\label{eq:trivial}
    T_1(x_i : i\in S)T_2(x_j : j\notin S).
\end{equation}
A tensor of the form in \eqref{eq:trivial} is in turn of the form of the rank 1 form in Definition \ref{def:tr},
\begin{equation*}
    T_1(x_1)\cdots T_d(x_d).
\end{equation*}
This implies the following result.
\begin{proposition} \label{prop:trivial}
    For any tensor $T$,
    \begin{equation*}
        \pr(T) \leq \sr(T) \leq \tr(T).
    \end{equation*}
\end{proposition}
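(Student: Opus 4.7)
The plan is to establish both inequalities in one short stroke by observing that each rank-one notion is more restrictive than the next, so that any minimal decomposition in the stronger notion is automatically a valid (not necessarily minimal) decomposition in the weaker one.

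First I would address $\sr(T) \leq \tr(T)$. Suppose $T = \sum_{i=1}^{r} T_i$ is a traditional rank decomposition with $r = \tr(T)$, so each summand factors as $T_i(x_1,\ldots,x_d) = T_{i,1}(x_1) \cdots T_{i,d}(x_d)$ with each $T_{i,j}$ a 1-tensor. Then grouping the last $d-1$ factors into a single multilinear function $U_i(x_2,\ldots,x_d) := T_{i,2}(x_2)\cdots T_{i,d}(x_d)$ exhibits $T_i$ as a slice rank 1 tensor in the sense of Definition~\ref{def:sr} (with $i=1$ in that definition). Therefore $T$ admits a slice rank decomposition with $r$ summands, which gives $\sr(T) \leq r = \tr(T)$.

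Next I would address $\pr(T) \leq \sr(T)$. Suppose $T = \sum_{i=1}^{s} T_i$ is a slice rank decomposition with $s = \sr(T)$, where each $T_i$ has the form $T_{i,1}(x_{k_i}) \, T_{i,2}(x_j : j \neq k_i)$ for some $k_i \in [d]$. Taking $S = \{k_i\}$, which satisfies $1 \leq \abs{S} < d$ since $d \geq 2$, displays $T_i$ as a partition rank 1 tensor per Definition~\ref{def:pr}. Hence $\pr(T) \leq s = \sr(T)$.

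There is essentially no obstacle here: the content is purely unwinding definitions. The only thing worth being careful about is the edge case $d = 1$, where partition rank is not defined (the condition $1 \leq \abs{S} < d$ is vacuous), but the notions of tensor rank discussed are implicitly about $d \geq 2$; for $d = 2$ all three notions collapse to ordinary matrix rank and the inequalities hold with equality, so nothing is lost.
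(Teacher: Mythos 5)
Your proof is correct and takes essentially the same route as the paper: both arguments simply unwind the definitions, observing that every traditional rank~1 tensor is slice rank~1 (group the last $d-1$ factors) and every slice rank~1 tensor is partition rank~1 (take $S=\{k_i\}$), so a minimal decomposition in the stronger notion is a valid decomposition in the weaker one. If anything, you state the inclusion behind $\sr(T)\leq\tr(T)$ in the correct direction (traditional rank~1 forms are slice rank~1 forms), whereas the paper's prose phrases that step loosely in the reverse direction; the substance is identical.
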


\section{Relating Analytic Rank to the Others} \label{sec:lov}

We now discuss some deeper relationships which will show a cycle of relationships between $\ar$, $\sr$, and $\tr$.
We start with an elegant result due to Lovett relating the analytic and partition rank.
For ease of notation we will look at tensors $T: V^d \to \F$ where $V = \F^n$ for some $n$.
It will be easy to see that this proof can be generalized to arbitrary $T \in \F^{n_1 \times \cdots \times n_d}$.

\begin{theorem}[\cite{lov:arank-applications}] \label{t:ar-pr}
    Let $T: V^d \to \F$ be a $d$-tensor.
    Then $\ar(T) \leq \pr(T)$.
\end{theorem}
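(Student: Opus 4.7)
The plan is to prove the equivalent statement $\bias(T) \geq \abs{\F}^{-\pr(T)}$. As a preliminary, I will show that $\bias(T)$ is a non-negative real for any multilinear $T$, by induction on $d$: the case $d = 1$ gives $\bias(T) \in \{0, 1\}$ depending on whether the linear form $T$ vanishes, and for $d \geq 2$ the tower identity $\bias(T) = \E_{x_d}[\bias(T_{x_d})]$, where $T_{x_d}$ is the $(d-1)$-tensor obtained by fixing the last input, propagates non-negativity. This lets me treat $\bias$ as a non-negative real throughout, avoiding absolute values.

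The main argument is induction on $d$. The base case $d = 2$ is a direct computation (in the spirit of the earlier proposition on geometric rank): for a bilinear form $T(x,y) = x^T M y$, $\bias(T) = \Pr_x[x^T M = 0] = \abs{\F}^{-\rank(M)}$, so $\ar(T) = \rank(M) = \pr(T)$. For $d \geq 3$, fix an optimal decomposition $T = \sum_{i=1}^r A_i(x_{S_i})\,B_i(x_{S_i^c})$ with $r = \pr(T)$ and seek an index $k \in [d]$ such that $S_i \neq \{k\}$ and $S_i^c \neq \{k\}$ for every $i$. When such $k$ exists, fixing $x_k$ leaves each summand partition-rank-$1$ as a $(d-1)$-tensor (the truncated $A_i$ or $B_i$ still has arity at least one on each side), so $\pr(T_{x_k}) \leq r$; by the inductive hypothesis $\bias(T_{x_k}) \geq \abs{\F}^{-r}$ pointwise, and averaging over $x_k$ yields $\bias(T) \geq \abs{\F}^{-r}$.

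The principal obstacle is the bad case in which every $k \in [d]$ is the singleton side of some $S_i$, so the naive reduction can blow up partition rank after fixing a variable. My first move is to collapse all slice-type summands sharing a common singleton index $k$ into a single partition-rank-$1$ contribution of the form $x_k^T \tilde B_k(x_{\neq k})$; this yields a decomposition with at most $r$ terms, and if it leaves some index $k$ unused we are back in the good case. Otherwise one is forced into the regime $d \leq r$, where I would switch to a Fourier-analytic identity in the style of Lovett: using $\mathbf{1}_{A_i(x_{S_i}) = 0} = \abs{\F}^{-1}\sum_{c_i \in \F}\chi(c_i A_i(x_{S_i}))$, expand the indicator of $E := \{x : \forall i,\ A_i(x_{S_i}) = 0\}$ as an average of additive characters; on $E$ one has $\chi(T(x)) = 1$, so $\Pr[E]$ contributes directly to $\bias(T)$, and the $c = 0$ term in the Fourier expansion of $\Pr[E]$ already gives $\Pr[E] \geq \abs{\F}^{-r}$. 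The delicate step, and what I expect to be the main technical hurdle, is showing that the contribution of $E^c$ to $\bias(T)$ together with the non-trivial Fourier terms cannot cancel this lower bound; this amounts to an inequality of the form $\bias(T + L) \geq c \cdot \bias(T)$ for certain affine perturbations $L$ on $V^d$, and is the analytic core of the argument where careful bookkeeping of characters and multilinearity will be needed.
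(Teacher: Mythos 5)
Your ``good case'' is fine: when some index $k$ is never a singleton side of the decomposition, restricting $x_k$ keeps every summand partition-rank at most $1$ as a $(d-1)$-tensor, the bias of a multilinear form is indeed a non-negative real, and averaging $\bias(T_{x_k})\geq\abs{\F}^{-r}$ over $x_k$ closes the induction. But this case only covers $d>2\pr(T)$ (pigeonhole), and the bad case is where the theorem actually lives --- e.g.\ any $3$-tensor whose optimal decomposition uses slice terms in all three directions. Your first move there is invalid: a sum $\sum_i A_i(x_k)B_i(x_{\neq k})$ of several slice terms with the same singleton index $k$ is \emph{not} a single partition-rank-$1$ tensor; writing it as $x_k^T\tilde B_k(x_{\neq k})$ with a vector-valued $\tilde B_k$ does not make it rank $1$ (if it did, every tensor would have partition rank $1$, since every tensor can be written in that form). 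So the ``collapse'' does not produce a shorter partition-rank decomposition, and you are left squarely in the unresolved regime.

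Your fallback for that regime is also not a proof. The Fourier expansion $\Pr[E]=\abs{\F}^{-r}\sum_{c\in\F^r}\bias\bigl(\sum_i c_iA_i(x_{S_i})\bigr)$ does not give $\Pr[E]\geq\abs{\F}^{-r}$ from the $c=0$ term alone, because sums of multilinear forms on \emph{different} variable subsets can have negative bias (over $\F_2$, $x_1x_2+x_1+x_2$ has bias $-1/2$); and even granting a bound on $\Pr[E]$, the contribution of $E^c$ to $\bias(T)$ can a priori cancel it. You explicitly defer this cancellation control, but that deferred step is precisely the content of the theorem: it is what Lovett's lemma (Lemma~\ref{lem:lov}) supplies, namely $\abs{\bias(R)}\leq\bias(R_{[d]})$ when lower-order multilinear terms are added. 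The paper's route avoids your case analysis entirely: subadditivity of analytic rank (Theorem~\ref{t:ar-subadd}) reduces to a single partition-rank-$1$ tensor $T=T_1(x_A)T_2(x_B)$, and then one averages the bias of $F_{a,b}(x)=(T_1(x_A)+a)(T_2(x_B)+b)$ over $a,b$, using Lemma~\ref{lem:lov} to compare each $\bias(F_{a,b})$ with $\bias(T)$ and computing $\E_{a,b}\bias(F_{a,b})=\abs{\F}^{-1}$. As it stands, your argument proves the statement only when $d>2\pr(T)$; to complete it you would essentially have to prove Lemma~\ref{lem:lov} (or an equivalent bias-monotonicity statement), at which point the paper's shorter reduction is available anyway.
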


To prove this theorem we will need a lemma as well as another theorem about the arithmetic rank. Both of these are due to Lovett.

\begin{lemma}[\cite{lov:arank-applications}] \label{lem:lov}
    For each $I \subseteq [d]$ let $R_I : V^{\abs{I}} \to \F$ be an $\abs{I}$-tensor. Let
    \begin{equation*}
        R(x) = \sum_{I \subseteq [d]} R_I(x_j : j \in I).
    \end{equation*}
    Then $\abs{\bias(R)} \leq \bias(R_{[d]})$.
\end{lemma}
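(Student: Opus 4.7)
My plan is to integrate out the variables $x_d, x_{d-1}, \ldots, x_1$ one at a time. At each step, the Fourier identity $\mathbf{1}[\alpha = 0] = \E_{y \in V} \chi(\alpha(y))$ for a linear functional $\alpha \colon V \to \F$ (valid because $\chi$ is nontrivial) will let me replace an indicator of vanishing by a character sum, producing a clean induction.

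The key sub-claim is the following: for any $i \in [d]$, if we set $L_i := \sum_{I \ni i} R_I$, then $|\bias(R)| \leq \bias(L_i)$, and moreover $\bias(L_i)$ is a non-negative real number. Granting this sub-claim, I would iterate it with $i = d, d-1, \ldots, 1$; each application restricts the summation to subsets $I$ containing one more prescribed index, so after $d$ steps only $R_{[d]}$ remains. Since each intermediate bias is a probability, its absolute-value bars can be dropped and the inequalities chain into $|\bias(R)| \leq \bias(R_{[d]})$.

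To establish the sub-claim (say, for $i = d$), I would first apply the triangle inequality after pulling out the $x_d$-expectation:
\begin{equation*}
|\bias(R)| \leq \E_{x_1, \ldots, x_{d-1}} \left| \E_{x_d} \chi(R(x)) \right|.
\end{equation*}
For fixed $x_1, \ldots, x_{d-1}$, multilinearity lets me split $R = C + L$, where $C := \sum_{I \not\ni d} R_I$ does not involve $x_d$ and $L := L_d$ is a linear functional of $x_d$ with zero constant term. Consequently $\E_{x_d} \chi(R) = \chi(C) \cdot \mathbf{1}[L(x_1, \ldots, x_{d-1}; \cdot) \equiv 0]$, whose modulus is $\mathbf{1}[L \equiv 0]$, and integrating gives $|\bias(R)| \leq \Pr_{x_1, \ldots, x_{d-1}}[L(x_1, \ldots, x_{d-1}; \cdot) \equiv 0]$.

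The step I expect to be the main obstacle is converting this indicator back into a character sum so that it can be recognized as a bias. Using the Fourier identity above, I can rewrite
\begin{equation*}
\mathbf{1}[L(x_1, \ldots, x_{d-1}; \cdot) \equiv 0] = \E_{y \in V} \chi(L(x_1, \ldots, x_{d-1}; y)),
\end{equation*}
and then integrating against $x_1, \ldots, x_{d-1}$ and relabeling $y$ as $x_d$ yields $|\bias(R)| \leq \E_x \chi(L(x)) = \bias(L_d)$. The non-negativity $\bias(L_i) \geq 0$ follows from the same integration-in-$x_i$ argument applied to $L_i$ itself: since every term contains $x_i$ linearly with zero constant, integrating over $x_i$ produces a $\{0,1\}$-valued indicator, and averaging over the remaining variables gives a probability. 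This completes the strategy.
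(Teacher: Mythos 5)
Your proposal is correct and follows essentially the same route as the paper: fix one index $i$ at a time, split $R$ into the part whose terms involve $x_i$ and the part that does not, use the orthogonality identity $\E_{x_i}\chi(\alpha(x_i)) = \mathbf{1}[\alpha \equiv 0]$ to bound $\abs{\bias(R)}$ by the bias of $\sum_{I \ni i} R_I$, and iterate over $i = d, d-1, \ldots, 1$; the paper merely packages the single-step inequality via the auxiliary functions $A(x,y) = \sum_i x^i W_i(y)$ and $B = A + W_0$. Your explicit observation that each intermediate bias is a probability (so the absolute values chain correctly) is a point the paper leaves implicit, but it is the same argument in substance.
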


\begin{proof}
    First, let $W_0, \ldots, W_n$ be arbitrary functions from $\F^m$ to $\F$. Let $A$ and $B$ be functions from $\F^n \times \F^m$ to $\F$ defined by
    \begin{align*}
        A(x,y) &= \sum_{i=1}^n x^i W_i(y), \text{ and} \\
        B(x,y) &= A(x,y) + W_0(y).
    \end{align*}
    From the definition of bias we get
    \begin{align*}
        \bias(B) &= \E_{x,y} \chi(B(x,y)) \\
        &= \E_{x,y} \chi(A(x,y) + W_0(y)) \\
        &= \E_{x,y} [\chi(A(x,y)) \cdot \chi(W_0(y))] \\
        &= \E_y [\chi(W_0(y)) \cdot \E_x \chi(\sum_{i=1}^n x^i W_i(y))] \\
        &= \E_y [1_{W_1(y) = \cdots = W_n(y) = 0} \chi(W_0(y))],
    \end{align*}
    where the last equality holds because the expectation over $x$ is 0 whenever any of the $W_j$ with $1 \leq j \leq n$ are non-zero.
    Taking absolute values and using the triangle inequality gives us
    \begin{equation} \label{eq:lemma}
        \abs{\bias(B)} \leq \E_y [1_{W_1(y) = \cdots = W_n(y) = 0}] = \bias(A).
    \end{equation}

    With this smaller result, we prove the lemma by applying it repeatedly.
    Fixing $i \in [d]$, we can break up the summation in the definition of $R$ as follows.
    \begin{equation*}
        R(x) = \sum_{I \subseteq [d] ; i \in I} R_I(x_j : j \in I) + \sum_{I \subseteq [d] ; i \notin I} R_I(x_j : j \in I).
    \end{equation*}
    Notice that each $R_I(x_j : j \in I)$ where $i \in I$ is a tensor depending on $x_i$, and thus can be written as $x_i^j W_j(x_k : k \neq i)$. Thus the first sum is of the form $\sum_j x_i^j W_j(y)$, which matches the form of $A$ above. The second sum does not depend on $x_i$, and so is of the form of $B$.
    Thus, applying Equation~\ref{eq:lemma} gives us
    \begin{equation*}
        \abs{\bias(R)} \leq \bias\left(\sum_{I \subseteq [d];i \in I} R_I(x_j : j \in I)\right).
    \end{equation*}
    We use this inequality iteratively.
    First, applying it for $i = d$ we have
    \begin{equation*}
        \abs{\bias(R)} \leq \bias\left(\sum_{I \subseteq [d];d \in I} R_I(x_j : j \in I)\right).
    \end{equation*}
    Using $\sum_{I \subseteq [d];d \in I} R_I(x_j : x_j \in I)$ instead of the tensor $R$ in our inequality with $i = d-1$ gives us
    \begin{equation*}
        \abs{\bias\left(\sum_{I \subseteq [d];d \in I} R_I(x_j : j \in I)\right)} \leq \bias\left(\sum_{I \subseteq [d-1];d-1 \in I; d \in I} R_I(x_k : k \in I)\right).
    \end{equation*}
    Continuing this process and applying the inequality for $d-2, \ldots, 1$, and then chaining the inequalities gives us
    \begin{align*}
        \abs{\bias(R)} \leq \bias(R_{[d]})
    \end{align*}
    as desired.
\end{proof}

Using this lemma, Lovett proves that the analytic rank is sub-additive.
\begin{theorem}[\cite{lov:arank-applications}] \label{t:ar-subadd}
    Let $T, S : V^d \to \F$ be $d$-tensors. Then
    \begin{equation*}
        \ar(T + S) \leq \ar(T) + \ar(S).
    \end{equation*}
\end{theorem}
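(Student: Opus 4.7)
The claim rewrites, using that the bias of a multilinear form is non-negative (it equals the probability that fixing all but one of its variables yields the zero linear form), as the multiplicative inequality $\bias(T)\bias(S) \leq \bias(T+S)$. My plan is to exhibit an auxiliary function $R$ on $V^{2d}$ with $\bias(R) = \bias(T)\bias(S)$, and then use the one-variable bias inequality that powers Lemma~\ref{lem:lov} to obtain $|\bias(R)| \leq \bias(T+S)$.

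Concretely, I would set $R(x, y) := T(x) + S(x+y)$ with $x, y \in V^d$, viewed as a function on $V^{2d}$. Since $y \mapsto x+y$ is a bijection on $V^d$ for each fixed $x$ and $\chi$ is an additive character, $\E_y \chi(S(x+y)) = \bias(S)$ for every $x$, so
\[ \bias(R) = \E_{x,y}\,\chi(T(x))\chi(S(x+y)) = \bias(S) \cdot \E_x \chi(T(x)) = \bias(T)\bias(S). \]
Next, expand $S(x+y)$ by multilinearity as $\sum_{I \subseteq [d]} S_I$, where $S_I$ is the $d$-tensor obtained by plugging $x_i$ into slot $i$ when $i \in I$ and $y_j$ into slot $j$ otherwise. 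Since $S_{[d]} = S(x_1, \ldots, x_d)$, this gives
\[ R(x, y) = (T+S)(x_1, \ldots, x_d) + \sum_{I \subsetneq [d]} S_I, \]
and, crucially, every summand $S_I$ with $I \subsetneq [d]$ is constant in at least one of $x_1, \ldots, x_d$ (namely any $x_j$ with $j \notin I$, where $S_I$ uses $y_j$ instead).

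Finally, I would apply the key one-variable inequality inside the proof of Lemma~\ref{lem:lov}---namely $|\bias(B)| \leq \bias(A)$ when $B = A + W_0$ and $A$ is linear in a distinguished coordinate---$d$ times in succession, once for each of $x_1, \ldots, x_d$, at each step discarding the piece of the current function that is constant in the current $x_i$. After $d$ iterations every $S_I$ with $I \subsetneq [d]$ has been dropped, leaving only $(T+S)(x)$, so $|\bias(R)| \leq \bias(T+S)$. Combining this with $\bias(R) = \bias(T)\bias(S)$ and taking $-\log_{|\F|}$ of both sides gives the theorem. The only subtle point is that Lemma~\ref{lem:lov} as stated runs its iteration over all coordinates of the ambient tensor (so that only the top piece survives); here the ambient has $2d$ coordinates and I need to halt the iteration after processing just the $x$-coordinates. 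This ``partial-iteration'' variant is a trivial adaptation of the argument already written in the proof of Lemma~\ref{lem:lov}, and is the one place I would peek under the lemma's hood rather than invoke it as a black box.
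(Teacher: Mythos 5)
Your proposal is correct and follows essentially the same route as the paper's (Lovett's) proof: the same multilinear expansion of $S(x+y)$ into pieces $S_I$, with the same one-variable bias inequality from the proof of Lemma~\ref{lem:lov} doing all the work. The only (harmless) difference is that you keep $y$ uniform, getting the exact factorization $\bias(R)=\bias(T)\bias(S)$, and run the iteration over the $x$-blocks only, whereas the paper fixes a particular value $b$ of $y$ and then applies Lemma~\ref{lem:lov} verbatim; one could equally well fix $y=b$, apply the lemma for each $b$, and average, so your ``partial-iteration'' adaptation is indeed as trivial as you claim.
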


\begin{proof}[Proof Sketch]
    For this proof, Lovett defines functions $T_I$ and $S_I$ such that for any $x, y$, $T(x, y) = \sum_{I \subseteq [d]} T_I(x_I, y_{[d] - I})$ (similarly for $S$) where $x_I = (x_i : i \in I)$ (similarly for $y$).
    Using this decomposition and performing some algebra leads to
    \begin{align*}
        \bias(T)\bias(S) \leq \abs{\bias\left((T+S)(x) + \sum_{I \subsetneq [d]} S_I(x_I, b_{[d] - I})\right)},
    \end{align*}
    where $b$ is a fixed choice for $y$.

    Applying Lemma~\ref{lem:lov} to the functions $R_{[d]} = (T + S)(x)$ and $R_I = S_I(x_I, b_{[d] - I})$ shows that the right hand side of the inequality is less than or equal to $\bias(R{[d]})$, which in our case equals $\bias(T+S)$.
    Putting everything together we get
    \begin{equation*}
        \bias(T+S) \geq \bias(T)\bias(S),
    \end{equation*}
    and so by the definition of analytic rank,
    \begin{align*}
        \ar(T+S) \leq \ar(T) + \ar(S).
    \end{align*}
\end{proof}

Using these two results, we can now prove Theorem~\ref{t:ar-pr}.

\begin{proof}[Proof of Theorem~\ref{t:ar-pr}]
    Given Theorem~\ref{t:ar-subadd}, it suffices to prove Theorem~\ref{t:ar-pr} for tensors of partition rank 1.
    So suppose $T : V^d \to \F$ has partition rank 1.
    Then we can find a partition $A \sqcup B = [d]$ with $\abs{A}, \abs{B} \geq 1$ such that
    \begin{equation*}
        T(x) = T_1(x_i : i \in A) T_2(x_j : j \in B).
    \end{equation*}
    For convenience, let us denote $(x_i : i \in A)$ by $x_A$ and $(x_j : j \in B)$ by $x_B$.

    Since the partition rank of $T$ is 1, we need to show that $\ar(T) \leq 1$. To do so, it suffices to show (by definition of analytic rank) that $\bias(T) \geq \abs{\F}^{-1}$.
    For any $a, b \in \F$ define the function
    \begin{align*}
        F_{a,b}(x) = (T_1(x_A) + a)(T_2(x_B) + b).
    \end{align*}
    Expanding it out we can write
    \begin{align*}
        F_{a,b}(x) &= T_1(x_A)T_2(x_B) + T_1(x_A)b + T_2(x_B)a + ab \\
        &= T(x) + T_1(x_A)b + T_2(x_B)a + ab.
    \end{align*}
    Letting $R_A(x) = T_1(x_A)b$, $R_B(x) = T_2(x_B)a$, $R_\emptyset = ab$, and $R_{[d]} = T$, we can apply Lemma~\ref{lem:lov} to get
    \begin{align*}
        \abs{\bias(F_{a,b})} \leq \bias(T).
    \end{align*}
    On the other hand, if $a$ and $b$ are chosen uniformly, we can take the expectation of the bias over $a$ and $b$ to get
    \begin{align*}
        \E_{a,b} \bias(F_{a,b}) &= \E_{a,b \in F; x \in V^d}[\chi((T_1(x_A) + a)(T_2(x_B) + b))] \\
        &= \E_{a,b \in \F} [\chi(ab)] \\
        &= \mathrm{Pr}_{b \in \F} [b = 0] \\
        &= \abs{\F}^{-1}.
    \end{align*}
    This proves our theorem.
\end{proof}

Combining Theorem~\ref{t:ar-pr} with Proposition~\ref{prop:trivial} gives us the following corollary.

\begin{corollary}
    Let $T : V^d \to \F$ be a $d$-tensor. Then $\ar(T) \leq \sr(T)$.
\end{corollary}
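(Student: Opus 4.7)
The plan is to observe that this corollary is an immediate consequence of chaining together the two results just established. Specifically, Theorem~\ref{t:ar-pr} gives $\ar(T) \leq \pr(T)$ for any $d$-tensor $T : V^d \to \F$, and Proposition~\ref{prop:trivial} (the trivial relationships from Section~\ref{sec:triv}) gives $\pr(T) \leq \sr(T)$. Composing these yields
\begin{equation*}
    \ar(T) \leq \pr(T) \leq \sr(T),
\end{equation*}
which is precisely the claim.

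There is no real obstacle to overcome: all the work has already been done in proving Theorem~\ref{t:ar-pr} (which required Lemma~\ref{lem:lov} and the subadditivity Theorem~\ref{t:ar-subadd}), and the inequality $\pr(T) \leq \sr(T)$ is a free consequence of the fact that every slice-rank-$1$ summand is a special case of a partition-rank-$1$ summand (take $S = \{i\}$ in Definition~\ref{def:pr}). Thus the proof should consist of a single sentence invoking the two earlier statements and transitivity of $\leq$. No additional lemmas, no case analysis, and no computation are needed.
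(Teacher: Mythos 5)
Your proposal is correct and is exactly the paper's argument: the corollary is stated immediately after combining Theorem~\ref{t:ar-pr} with Proposition~\ref{prop:trivial}, giving $\ar(T) \leq \pr(T) \leq \sr(T)$. Nothing further is needed.
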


\section{Closing the Loop for 3-tensors: Cohen and Moshkovitz's Argument} \label{sec:mosh}

Cohen and Moshkovitz~\cite{coh-mos:struct-bilin} proved the following two theorems which, for 3-tensors over finite fields with at least 3 elements, show asymptotic equivalence between $\ar$, $\gr$, and $\sr$ when combined with Lovett's result.

\begin{theorem}[\cite{coh-mos:struct-bilin}]\label{thm:sr-gr}
    For any $3$-tensor $T$ over a perfect field $\F$,
    \begin{equation*}
        \sr(T) \leq 3\gr(T)
    \end{equation*}
\end{theorem}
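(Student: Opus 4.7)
The plan is to pick a generic slice of $T$ of maximum matrix rank, use it to set up a block decomposition of $T$, and then use a Schur complement argument to control the remaining block. Write $V_i = \F^{n_i}$. Throughout I would work over the algebraic closure $\bar\F$, using that $\F$ is perfect to handle the descent at the end. Let $r = \gr(T)$. A direct dimension count shows that $r$ equals the generic matrix rank of the first-coordinate slices $M_{x_1}(y,z) := T(x_1,y,z) \in V_2^* \otimes V_3^*$: the projection of the defining variety $X \subseteq V_1 \times V_2$ onto $V_1$ is surjective (since $(x_1,0) \in X$ for all $x_1$), so by Chevalley the generic fiber has dimension $\dim V_2 - \rank M_{x_1}$, which forces the (lower semi-continuous) generic value of $\rank M_{x_1}$ to equal $r$.

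Next I would pick $x_1^* \in V_1$ with $\rank M_{x_1^*} = r$ and choose bases $y_1,\ldots,y_{n_2}$ of $V_2$ and $z_1,\ldots,z_{n_3}$ of $V_3$ (with dual bases $v_a, w_b$) adapted to $M_{x_1^*}$, so that the $r \times r$ top-left block $A(x_1) := [T(x_1, y_a, z_b)]_{a,b \leq r}$ satisfies $A(x_1^*) = I_r$. Splitting $V_2 = Y \oplus Y^\perp$ and $V_3 = Z \oplus Z^\perp$ along the first $r$ basis vectors, I obtain the block decomposition $T = T_{11} + T_{12} + T_{21} + T_{22}$. By inspection each of $T_{11}, T_{12}, T_{21}$ has slice rank at most $r$ (each is a sum of $r$ slice-rank-$1$ terms, obtained by slicing along the coordinate restricted to an $r$-dimensional subspace), contributing at most $3r$ to $\sr(T)$.

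The main obstacle is the complementary block $T_{22}(x_1,y,z) = T(x_1, \pi_{Y^\perp}(y), \pi_{Z^\perp}(z))$, which vanishes at $x_1 = x_1^*$ but is generally nonzero. To control it I would use that $\rank M_{x_1} \leq r$ for every $x_1$ (the generic rank equals the maximum rank, since rank is lower semi-continuous). Wherever $A(x_1)$ is invertible, the Schur complement formula gives $D(x_1) = C(x_1) A(x_1)^{-1} B(x_1)$ on the matrix blocks, equivalently the polynomial identity
\begin{equation*}
    \det A(x_1) \cdot T_{22}(x_1, y, z) = \sum_{a, b = 1}^r P_a(x_1, y) \cdot \mathrm{adj}(A(x_1))_{ab} \cdot Q_b(x_1, z),
\end{equation*}
where $P_a(x_1, y) = T(x_1, \pi_{Y^\perp}(y), z_a)$ and $Q_b(x_1, z) = T(x_1, y_b, \pi_{Z^\perp}(z))$ are bilinear forms.

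The hard part will be converting this rational identity (with denominator $\det A(x_1)$) into a genuine polynomial slice-rank decomposition. I would try either an induction on $r$, choosing $x_1^*$ carefully so that the restricted tensor on $V_1 \times Y^\perp \times Z^\perp$ has strictly smaller geometric rank and then bounding $\sr(T_{22})$ recursively, or a global re-packaging of the Schur identity that absorbs $T_{22}$ into modified versions of $T_{12}$ and $T_{21}$ while preserving the slice-rank-$r$ bound on each. Controlling $\det A(x_1)$ and ensuring that the modified tensors remain genuine polynomial slice-rank-$1$ pieces (rather than merely rational expressions on a dense open set) is the central difficulty and the step I would expect to consume most of the work.
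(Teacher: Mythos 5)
There is a genuine gap at the very first step: $\gr(T)$ is \emph{not} the generic matrix rank of the slices $M_{x_1} = T(x_1,\cdot,\cdot)$, and your Chevalley dimension count does not show that it is. The count only computes the dimension of the component of $X = \{(x_1,x_2) : T(x_1,x_2,\cdot)\equiv 0\}$ that dominates $V_1$; components lying over positive-codimension loci of $V_1$ where the slice rank drops can have strictly larger dimension, and those are exactly what can make $\gr(T)$ small. Concretely, take $T(x,y,z) = x^1\sum_{i=1}^{n} y^i z^i$: every slice $M_x = x^1 I_n$ has rank $n$ generically, yet $X$ contains $\{x^1=0\}\times V_2$, so $\gr(T) = 1$ (consistent with $\sr(T)=1$). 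The correct statement is the balanced characterization used in the paper (Lemma~\ref{lem:gr-equiv-fact}, from Kopparty--Moshkovitz--Zuiddam): $\gr(T) = \min_r\{\,r + \codim\{x_1 : \rank T(x_1,\cdot,\cdot) = r\}\,\}$, which is in general strictly smaller than the generic (equivalently maximal) slice rank. Two consequences for your outline: with $r := \gr(T)$ there may be no $x_1^*$ with $\rank M_{x_1^*} = r$ at all (in the example the slice ranks take only the values $0$ and $n$), so the base point of your block decomposition need not exist; and even if your Schur-complement program were completed, it would bound $\sr(T)$ by roughly three times the \emph{maximal} slice rank, which is far weaker than the theorem.

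Separately, the step you yourself flag as the central difficulty --- converting the rational identity with denominator $\det A(x_1)$ into a genuine polynomial slice-rank decomposition --- is left open, and it is precisely where the actual argument goes a different way. Cohen and Moshkovitz pass to the matrix space $L$ spanned by the slices, fix $r$ achieving the minimum in the characterization above, and prove by induction that $\sr(L) \leq 2r + \codim_L(L\cap M_r)$. The polynomial substitute for your three blocks $T_{11}+T_{12}+T_{21}$ is the computation $\sr(T_A M_r)\leq 2r$ for the tangent space to the determinantal variety $M_r$ at a point $A$ of rank exactly $r$; the analogue of your troublesome block $T_{22}$ is handled not by a Schur complement but by pure dimension counting, $\sr(P^\perp)\leq \dim P^\perp = \codim_L P \leq \codim_L(L\cap M_r)$, via the tangent-space inequality $\dim L_r \leq \dim T_A L_r$. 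This gives $\overline{\sr}(T)\leq 2\gr(T)$ over $\overline{\F}$, and the factor $3$ in the theorem comes only from the descent inequality $2\sr(T)\leq 3\overline{\sr}(T)$, which is where perfectness of $\F$ enters --- a point your sketch would also need to address rather than only gesture at.
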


\begin{theorem}[\cite{coh-mos:struct-bilin}]\label{thm:ar-gr}
    For any 3-tensor $T$ over any finite field $\F$,
    \begin{equation*}
        \ar(T) \geq (1 - \log_{\abs{\F}} 2)\gr(T).
    \end{equation*}
\end{theorem}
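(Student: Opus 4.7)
Writing $c = \gr(T)$, the inequality is equivalent to the bias bound $\bias(T) \leq (2/\abs{\F})^c$. The plan is to recast the bias as a point count on the variety from Definition~\ref{def:gr}. For fixed $(x,y) \in \F^{n_1} \times \F^{n_2}$, the map $z \mapsto T(x,y,z)$ is a linear form on $\F^{n_3}$, and the average of a nontrivial additive character $\chi$ over a nonzero linear form vanishes; hence performing the $z$-average first in the definition of $\bias(T)$ gives
\begin{equation*}
    \bias(T) = \mathrm{Pr}_{x,y}\bigl[T(x,y,z) = 0 \text{ for all } z\bigr] = \frac{\abs{V_T(\F)}}{\abs{\F}^{n_1+n_2}},
\end{equation*}
where $V_T = \{(x,y) : T(x,y,z) = 0 \text{ for all } z\}$ is precisely the variety used to define $\gr(T)$. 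Since $\codim V_T = c$, the theorem reduces to the point-count inequality $\abs{V_T(\F)} \leq 2^c \cdot \abs{\F}^{n_1+n_2-c}$.

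I would then invoke the standard Schwartz--Zippel-type bound for algebraic varieties: for any variety $V \subseteq \F^N$ defined over $\F$ with $\dim V = d$ and geometric degree $D$, one has $\abs{V(\F)} \leq D \cdot \abs{\F}^d$ (a consequence of iterating the usual Schwartz--Zippel lemma along a regular sequence cutting out a complete intersection that contains $V$ as a top-dimensional component). Applied with $\dim V_T = n_1+n_2-c$, this reduces the theorem to the purely geometric bound $\deg(V_T) \leq 2^c$.

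To obtain that degree bound, I would exhibit $V_T$ inside an explicit complete intersection of $c$ bilinear forms. Note that $V_T$ is set-theoretically the common zero locus of the family $L_z(x,y) := T(x,y,z)$, each a bilinear (hence total-degree-$2$) form in $(x,y)$ depending $\overline{\F}$-linearly on $z \in \overline{\F}^{n_3}$. A generic-position argument produces $z_1, \ldots, z_c \in \overline{\F}^{n_3}$ making $L_{z_1}, \ldots, L_{z_c}$ a regular sequence: at step $i \leq c$, the ``bad'' $z$ are those for which $L_z$ vanishes on some top-dimensional component $C \not\subseteq V_T$ of $V(L_{z_1}, \ldots, L_{z_{i-1}})$, and for each such $C$ this bad set is a proper $\overline{\F}$-linear subspace of $\overline{\F}^{n_3}$ (proper because, by linearity of $L_z$ in $z$, otherwise $T(x,y,\cdot)$ would vanish identically on $C$, forcing $C \subseteq V_T$); over the infinite field $\overline{\F}$ a finite union of proper subspaces cannot cover the ambient space. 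Bezout then gives $\deg V(L_{z_1}, \ldots, L_{z_c}) \leq 2^c$, and since $V_T \subseteq V(L_{z_1}, \ldots, L_{z_c})$ with equal dimension, every top-dimensional component of $V_T$ is a component of the complete intersection, so $\deg(V_T) \leq 2^c$. The main technical obstacle is precisely this step: working over $\overline{\F}$ rather than $\F$ is essential, because for small $\F$ one might be unable to find a regular sequence of $\F$-rational bilinear forms at all. This does no damage, however, because $\deg(V_T)$ is a geometric invariant and the Schwartz--Zippel bound for $\F$-points only requires $V_T$ to be $\F$-defined. Putting everything together gives $\bias(T) \leq (2/\abs{\F})^c$ and hence $\ar(T) \geq (1 - \log_{\abs{\F}} 2)\,\gr(T)$.
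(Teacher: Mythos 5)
Your first half is exactly the paper's argument: you average over $z$ first to identify $\bias(T)$ with the density $\abs{V_T(\F)}/\abs{\F}^{n_1+n_2}$ of rational points of the kernel variety, and reduce the theorem to a Schwartz--Zippel-type point count with degree parameter $2$. The paper stops there by invoking Lemma~\ref{lem:schwarz-zippel} as a black box; your added content is an attempted proof of the special case of that lemma that is needed, via the bound $\abs{V(\F)} \leq \deg(V)\,\abs{\F}^{\dim V}$ together with a generic complete-intersection/Bezout argument. That extra part has one genuine slip: your Bezout step only shows that the \emph{top-dimensional} irreducible components of $V_T$ are components of $W = V(L_{z_1},\ldots,L_{z_c})$, hence bounds only the sum of \emph{their} degrees by $2^c$. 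But $V_T$ need not be equidimensional (for instance, the tensor whose slices are $E_{11}$ and $E_{12}$ has kernel variety a hyperplane union a codimension-$2$ plane), and lower-dimensional components of $V_T$ are merely contained in components of $W$, so their degrees are not controlled by $\deg(W)$. Yet the point-count lemma you invoke needs the total degree over all components (with only the top-dimensional degree the lemma is false already for a line union a point), so ``$\deg(V_T)\le 2^c$'' is unjustified as written.

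The fix is immediate with what you already have: do not bound $\deg(V_T)$ at all, but count points of $W$ itself. Since $V_T \subseteq W$, $\dim W = n_1+n_2-c$, and the total degree of $W$ is at most $2^c$ by Bezout, the standard count gives $\abs{V_T(\F)} \le \abs{W \cap \F^{n_1+n_2}} \le 2^c\,\abs{\F}^{n_1+n_2-c}$, which is all you need. The only point to check is that the count $\abs{X \cap \F^{N}} \le \deg(X)\,\abs{\F}^{\dim X}$ does not require $X$ to be defined over $\F$ (your $z_i$ live in $\overline{\F}^{n_3}$, so $W$ need not be $\F$-rational); it does not --- the usual proof slices by the $\F$-rational hyperplanes $\{x_i = a\}$, $a \in \F$, and inducts on dimension, never using rationality of $X$. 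With that modification your argument becomes a correct, self-contained proof of the case of Lemma~\ref{lem:schwarz-zippel} that the paper cites without proof, along the same overall route as the paper.
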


Our goal will be to give an intuitive sketch of their argument.
Techniques from algebraic geometry will be essential, so we give some background.

\subsection{Background from Algebraic Geometry}
From now on, let $\K$ be a field.
Recall that a \emph{variety cut out by a (finite) number of polynomials} over $\K^n$ is the subspace of $\K^n$ where all the polynomials vanish.
The ideal $I(V)$ of a variety $V$ is defined as follows.
\begin{definition}[Ideal of a Variety]
Let $V \subset \K^n$ be a variety. The \emph{ideal} of $V$ is
\begin{equation*}
     I(V) = \{f \in \K[x] : f(p) = 0 \text{ for each } p \in V \}.
\end{equation*}

\end{definition}
Now we define the tangent space $T_p V$ to a variety $V$ at $p$.
\begin{definition}[Tangent Space]
    Let $V \subset \K^n$ be a variety. The \emph{tangent space} $T_p V$ to a variety $V$ at $p$ is
    \begin{equation*}
        T_p V = \left\{\mathbf v\in \K^n :  \frac{\partial g}{\partial \mathbf v}(p) = 0, \text{ for each } g \in I(V) \right\}.
    \end{equation*}
\end{definition}
One can check that any variety can be written uniquely as the union of \emph{irreducible varieties}. An irreducible variety is one which cannot be written as the union of strictly contained subvarieties. Then the \emph{dimension} of a variety $V$ can be defined as follows.
\begin{definition}[Dimension of a Variety]
    The \emph{dimension} of a variety $V \subset \K^n$, written $\dim V$, is the maximal length of a chain of irreducible varieties such that
    \begin{equation*}
        \emptyset \neq V_1 \subsetneq V_2 \subsetneq \cdots \subsetneq V_k \subsetneq V.
    \end{equation*}
    The \emph{codimension}
    of a variety $V \subset \K^n$, $\codim V$, is
    \begin{equation*}
        \codim V = n - \dim V.
    \end{equation*}
\end{definition}

\subsection{Proof Sketch of Theorem \ref{thm:sr-gr}}
For the full argument see \cite{coh-mos:struct-bilin}.
We will give a broader overview of the proof with the aim of conveying the intuition.

The proof of Theorem~\ref{thm:sr-gr} hinges on the following previously proved rephrasing of geometric rank.

\begin{lemma}\label{lem:gr-equiv-fact}
    For a 3-tensor over any field $\K$,
    \begin{equation*}
        \gr(T) = \min_r \{r + \codim\{x : \rank \ T(x, \cdot, \cdot) = r \} \}.
    \end{equation*}
\end{lemma}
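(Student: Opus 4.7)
The plan is to realize $V = \{(x,y) : T(x,y,z) = 0 \text{ for all } z\}$ as a fibered family over the $x$-coordinate and compute its dimension via the fiber dimension theorem, then translate into codimensions.

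First I would fix the ambient space: writing $T \in \K^{n_1 \times n_2 \times n_3}$, the variety from Definition~\ref{def:gr} is
\begin{equation*}
    V = \{(x,y) \in \K^{n_1} \times \K^{n_2} : T(x,y,z) = 0 \text{ for all } z \in \K^{n_3}\},
\end{equation*}
so $\gr(T) = n_1 + n_2 - \dim V$. The key observation is that for a fixed $x$, the set $\{y : T(x,y,z) = 0 \text{ for all } z\}$ is precisely the left null space of the matrix $M_x := T(x,\cdot,\cdot) \in \K^{n_2 \times n_3}$, which is a linear subspace of dimension $n_2 - \rank M_x$.

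Second, I would stratify the $x$-space by rank. For each integer $r \geq 0$, define
\begin{equation*}
    U_r = \{x \in \K^{n_1} : \rank M_x = r\}.
\end{equation*}
Each $U_r$ is locally closed, cut out by vanishing of all $(r{+}1)\times(r{+}1)$ minors of $M_x$ (viewed as polynomials in $x$) together with the non-vanishing of at least one $r \times r$ minor. Let $\pi : V \to \K^{n_1}$ be projection onto $x$. By the observation above, $\pi^{-1}(U_r)$ is a vector-bundle-like family: every fiber over $U_r$ is a linear subspace of $\K^{n_2}$ of dimension exactly $n_2 - r$. By the fiber dimension theorem (since the fiber dimension is constant on $U_r$),
\begin{equation*}
    \dim \pi^{-1}(U_r) = \dim U_r + (n_2 - r).
\end{equation*}

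Third, since $V = \bigsqcup_r \pi^{-1}(U_r)$ (as the $U_r$ partition $\K^{n_1}$), taking dimensions gives $\dim V = \max_r (\dim U_r + n_2 - r)$. Converting to codimensions via $\codim = (\text{ambient dim}) - \dim$,
\begin{equation*}
    \gr(T) = n_1 + n_2 - \dim V = \min_r \bigl( (n_1 - \dim U_r) + r \bigr) = \min_r \bigl( \codim U_r + r \bigr),
\end{equation*}
which is the claim. The main obstacle I anticipate is the careful bookkeeping at the algebraic-geometry level: justifying that one may restrict to a locally closed stratum $U_r$ and still apply the fiber dimension statement, and ensuring this works over a not-necessarily-algebraically-closed field $\K$ (one may need to pass to the algebraic closure and note that all the quantities involved, defined by ranks of polynomial matrices and codimensions, are preserved by this extension, as noted after Definition~\ref{def:gr}). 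Everything else is essentially linear-algebraic, once the stratification is set up.
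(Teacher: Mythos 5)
The paper itself gives no internal proof of this lemma---it simply defers to Theorem~3.1 of \cite{kop-mos-zui:gr-subrank}---so the comparison is really with that reference, and your argument is correct in outline and is essentially the same proof: realize $V$ as fibered over the $x$-coordinate, observe that the fiber over $x$ is the left kernel of the slice $M_x = T(x,\cdot,\cdot)$ (a linear space of dimension $n_2 - \rank M_x$), stratify by $U_r = \{x : \rank M_x = r\}$, and apply the fiber-dimension theorem stratum by stratum to get $\dim \pi^{-1}(U_r) = \dim U_r + n_2 - r$, hence $\gr(T) = \min_r\bigl(r + \codim U_r\bigr)$. Two small points to tighten, both of which you already flag: the fiber-dimension statement should be applied componentwise (to each irreducible component of the locally closed, generally reducible, stratum $U_r$, with dimensions of constructible sets read as dimensions of closures), and the minimum should run only over those $r$ with $U_r \neq \emptyset$ (empty strata being assigned codimension $+\infty$ by convention). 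Likewise, the whole computation should be carried out over $\overline{\K}$, consistent with the remark after Definition~\ref{def:gr}; with these conventions the bookkeeping $\dim V = \max_r(\dim U_r + n_2 - r)$ and the translation to codimensions are exactly right.
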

\begin{proof}
    See \cite[Theorem 3.1]{kop-mos-zui:gr-subrank}.
\end{proof}

Let $r$ achieve the minimum in Lemma \ref{lem:gr-equiv-fact}.
Defining
\begin{equation*}
    X_r = \{x \in \overline\F^{n_1} :  \rank T(x,\cdot,\cdot) \leq r\},
\end{equation*}
one can use Lemma~\ref{lem:gr-equiv-fact} to show that $\gr(T) = r + \codim X_r$.
A 3-tensor $T \in \F^{n_1 \times n_2 \times n_3}$ with encoding array $(T_{i,j,k})$ can be reinterpreted as a linear space
$L \subset \overline \F^{n_1 \times n_2}$ spanned by $\{M_1,\ldots, M_{n_3}\}$, where
$M_k = (T_{i,j,k})_{i,j}$. One also has an association going the other way.
Given a matrix space $L$, choose a basis $M_1,\ldots, M_k$, and let $T$ be the tensor with encoding array $(M_1,\ldots, M_k)$. Thus the notions of rank for tensors can be transferred to matrix spaces.

Now let
\begin{equation*}
    L_r = L \cap M_r,
\end{equation*}
where $M_r$ is the space of $n_1 \times n_2$ matrices of rank at most $r$. Note that $M_r$, being the space cut out by the $(r+1) \times (r+1)$ minors, is itself a variety.
It turns out that the variety $X_r$ is isomorphic to $L_r \times \overline \F^{n_1 - d}$. Thus $\codim X_r = \codim_L L_r$.\footnote{Here $\codim_L X$ is the codimension of a variety $X$ in a linear space $L$, i.e. $\codim_L X = \dim L - \dim X$.}
We have reduced the problem to analyzing these ``slices'' of the matrix space $L$. Above, $\overline \F$ is the algebraic closure of $\F$, which will be easier to work over in the arguments to follow. We let $\overline \sr(T)$ be as in Definition \ref{def:sr}, except the coefficients are allowed to come from $\overline \F$.
Clearly $\overline \sr(T) \leq \sr(T)$. One also has $2\sr(T) \leq 3\overline{\sr}(T)$ which is proved in \cite{coh-mos:struct-bilin}.

We would be done if we had the following.
\begin{lemma}\label{lem:key-mosh}
    Let $L \subset \K^{m\times n}$ be a matrix space over any algebraically closed field $\K$. For any $r \in \mathbb{N}$,
    \begin{equation*}
        \sr(L) \leq 2r + \codim_L L_r.
    \end{equation*}
\end{lemma}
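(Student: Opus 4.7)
The plan is to construct an explicit slice rank $1$ decomposition of the tensor $T(x,y,z) = x^\top M(z) y$---where $M(z) = \sum_k z_k M_k$ for a chosen basis of $L$---using at most $2r + c$ summands, with $c = \codim_L L_r$. Let $r^\ast := \max\{\rank M : M \in L_r\}$, which is at most $r$ by definition. Since $\K$ is algebraically closed, I may pick a smooth point $M^\ast$ of $L_r$ with $\rank M^\ast = r^\ast$; a short argument using lower semi-continuity of rank shows that when $r^\ast < r$ one in fact has $L_r = L_{r^\ast}$, so replacing $r$ by $r^\ast$ only strengthens the bound and I may assume $r^\ast = r$. Let $W$ and $V$ be the column and row spans of $M^\ast$, and fix complements $\K^m = W \oplus W'$, $\K^n = V \oplus V'$ so that $T$ can be decomposed under these splittings.

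Writing $x = x_W + x_{W'}$ and $y = y_V + y_{V'}$, I would expand
\[
T(x,y,z) = x_W^\top M(z) y + x_{W'}^\top M(z) y_V + x_{W'}^\top M(z) y_{V'}.
\]
The first summand depends linearly on the $r$-dimensional vector $x_W \in W$, so it splits into $r$ slice rank $1$ terms of the form $c_i(x) \cdot (p_i^\top M(z) y)$; the second summand depends linearly on $y_V \in V$, so it splits into $r$ slice rank $1$ terms of the form $d_j(y) \cdot (x_{W'}^\top M(z) q_j)$. Together these account for at most $2r$ slice rank $1$ summands.

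It remains to handle the residual $R(x,y,z) = x_{W'}^\top M(z) y_{V'}$, which corresponds to the ``projected'' matrix space $\tilde L := \phi(L) \subseteq \K^{(m-r)\times(n-r)}$, where $\phi \colon L \to \K^{(m-r)\times(n-r)}$ is the linear map extracting the $W' \times V'$ block of a matrix. Slicing $R$ in the $z$-direction gives $\sr(R) \leq \dim \tilde L$, so it suffices to show $\dim \tilde L \leq c$. For this I invoke a tangent-space computation at $M^\ast$: the tangent space to the rank-$\leq r$ variety in $\K^{m \times n}$ at a rank-$r$ matrix is precisely $\ker \phi$, as one sees from a Schur-complement calculation applied to $M^\ast + \delta$ written in block form relative to the $W\oplus W'$, $V\oplus V'$ splittings. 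Since $L_r$ sits inside the rank-$\leq r$ variety and $M^\ast$ is a smooth point of $L_r$, the tangent space $T_{M^\ast} L_r$ is contained in $L \cap \ker \phi$; since $\dim T_{M^\ast} L_r = \dim L - c$ at a smooth point, this forces $\dim \ker \phi \geq \dim L - c$ as a subspace of $L$, whence $\dim \tilde L = \dim L - \dim \ker \phi \leq c$, as desired.

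The main obstacle is the algebraic-geometric tangent-space computation, together with the case analysis that ensures the existence of a smooth point $M^\ast$ of $L_r$ whose rank equals $r$ (or the reduction to smaller $r^\ast$ when no such point exists). The assumption that $\K$ is algebraically closed is used throughout, both to guarantee generic smooth points on top-dimensional components of $L_r$ and to let dimension, codimension, and tangent spaces interact as linear subspaces of the ambient $\K^{m\times n}$. Combining the three pieces then gives $\sr(T) \leq r + r + c = 2r + c$, proving the lemma.
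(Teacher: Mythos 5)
Your construction is essentially the paper's proof made explicit: your $\ker\phi$ is the tangent space $T_{M^\ast}M_r$ to the determinantal variety at a rank-$r$ point, your first two blocks carry out the paper's calculation $\sr(T_A M_r)\le 2r$, and your bound $\sr(R)\le\dim\phi(L)$ is the paper's $\sr(P^\perp)\le\dim P^\perp$ with $P=L\cap T_A M_r$. However, there is a genuine gap in how you choose $M^\ast$. Your tangent-space step needs $\dim T_{M^\ast}L_r\ge\dim L_r$, i.e.\ it needs $M^\ast$ to lie on a component of $L_r$ of \emph{maximal} dimension (smoothness only gives you that $\dim T_{M^\ast}L_r$ equals the dimension of the component through $M^\ast$). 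Your reduction to $r^\ast=\max\{\rank M:M\in L_r\}$ guarantees that a rank-$r$ matrix exists somewhere in $L_r$, but not on a top-dimensional component: it can happen that $r^\ast=r$ while every maximal-dimensional component of $L_r$ is contained in $M_{r-1}$ and the rank-$r$ matrices live only on smaller components. In that case your $M^\ast$ sits on a small component, the tangent-space count only yields $\dim\phi(L)\le\dim L-\dim T_{M^\ast}L_r$, and the final bound degrades to something larger than $2r+\codim_L L_r$. You also cannot simply move $M^\ast$ to the big component while keeping $r$ fixed, because the block computation genuinely requires $\rank M^\ast=r$: at a point of rank less than $r$ the tangent space to $M_r$ is all of $\K^{m\times n}$, and the containment $T_{M^\ast}L_r\subseteq\ker\phi$ breaks down.

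This is precisely the case the paper's argument is built to handle: it fixes an irreducible component $V$ of $L_r$ with $\dim V=\dim L_r$ and, if $V\setminus M_{r-1}=\emptyset$, falls back on an induction on $r$ (then $V\subseteq L_{r-1}$, so $\codim_L L_{r-1}\le\codim_L L_r$ and the bound with $2(r-1)$ suffices); otherwise it takes $A\in V\setminus M_{r-1}$ and uses $\dim V\le\dim T_A V$, which holds at \emph{any} point of an irreducible variety, so no smooth point is needed. You can repair your write-up either by adding that induction, or non-inductively: choose your point on a top-dimensional component $V$ and replace $r$ by the maximal rank $r'\le r$ attained on $V$; since $V\subseteq M_{r'}$, your block decomposition at a rank-$r'$ point $A\in V$ gives $2r'$ slice-rank-one terms plus a residual of slice rank at most $\dim L-\dim T_A V\le\codim_L L_r$, hence $\sr(L)\le 2r'+\codim_L L_r\le 2r+\codim_L L_r$. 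The remaining ingredients of your proposal (the $r^\ast$ reduction, the Schur-complement identification of $T_{M^\ast}M_r$ with $\ker\phi$, and slicing the residual in the $z$-direction) are fine.
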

To see this, observe that
\begin{align*}
    \overline\sr(T) &= \sr(L) \leq 2r + \codim_L L_r \\
    &= 2r + \codim X_r \\
    &= 2\gr(T) - \codim X_r \\
    &\leq 2\gr(T).
\end{align*}
Applying $2\sr(T) \leq 3\overline \sr (T)$ gives $\sr(T) \leq 3\gr(T)$.

\begin{proof}[Proof sketch of Lemma \ref{lem:key-mosh}]
    We proceed by induction. The base case is $\sr(L) \leq \dim L$, which is not difficult to show. Now consider the inductive step.
    Let $V$ be an irreducible component of $L_r$ with $\dim V = \dim L_r$, and $A \in V \setminus M_{r-1}$.
    If $V \setminus M_{r-1} = \emptyset$ then we have $V \subseteq L_{r-1}$, and the result follows from induction.
    So we assume $V \setminus M_{r-1} \neq \emptyset$.
    The trick now is to decompose the matrix space $L$ into subspaces $P = L \cap T_A M_r$ and
    $P^\perp$. This particular decomposition is useful because one can run an explicit calculation to find that
    \begin{equation*}
    \sr(T_A M_r) \leq 2r.
    \end{equation*}
    We need the following simple properties of slice rank and tangent spaces to conclude the result from that calculation.
    \begin{lemma}\label{lem:simple-stuff} ~
    \begin{enumerate}
        \item $\sr(L) \leq \dim L$.
        \item $\sr(L') \leq \sr(L)$ if $L' \subset L$.
        \item $\sr(L + L') \leq \sr(L) + \sr(L')$.
        \item $T_p(V \cap W) \subset T_p V \cap T_p W$.
        \item If $V$ is irreducible and $p \in V$ then $\dim V \leq \dim T_p V$.
    \end{enumerate}
    \end{lemma}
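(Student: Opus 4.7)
The plan is to dispatch the five items separately, since they are essentially independent. Items (1)--(3) follow by direct manipulation of the tensor
\[
T_L(x,y,z) = \sum_{i=1}^k z^i M_i(x,y)
\]
built from a basis $\{M_1, \ldots, M_k\}$ of $L$, while (4)--(5) are standard algebraic geometry. I would first note that $\sr(T_L)$ is basis-independent: changing basis of $L$ by an invertible matrix $A$ replaces $T_L(x,y,z)$ by $T_L(x,y,A^T z)$, and slice rank is invariant under invertible linear substitutions in any coordinate, so writing $\sr(L) := \sr(T_L)$ is unambiguous.

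For (1), each summand $z^i M_i(x,y)$ is the product of the 1-tensor $z \mapsto z^i$ with the 2-tensor $M_i$, hence has slice rank $1$, giving $\sr(L) \leq k = \dim L$. For (2), I would extend a basis of $L'$ to a basis of $L$; in that extended basis, $T_{L'}$ arises from $T_L$ by setting the extra $z$-coordinates to zero, and this substitution sends any slice-rank-$1$ term to a slice-rank-$1$ term or to zero (either a 1-tensor factor in $z$ vanishes, or a tensor factor involving $z$ simply restricts to fewer coordinates), giving $\sr(L') \leq \sr(L)$. For (3), I would extend a basis $\{M_1, \ldots, M_a\}$ of $L$ to a basis $\{M_1, \ldots, M_a, N_1, \ldots, N_b\}$ of $L+L'$ with each $N_j \in L'$, and split
\[
T_{L+L'}(x,y,z) = \sum_{i=1}^a z^i M_i(x,y) + \sum_{j=1}^b z^{a+j} N_j(x,y).
\]
After padding the unused $z$-coordinates with zeros, the two summands are $T_L$ and $T_{L''}$ where $L'' = \mathrm{span}(N_1, \ldots, N_b) \subseteq L'$. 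Subadditivity of slice rank, immediate by concatenating rank-$1$ decompositions, combined with item (2) applied to $L'' \subseteq L'$, yields $\sr(L+L') \leq \sr(L) + \sr(L'') \leq \sr(L) + \sr(L')$.

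For (4), I would observe that any polynomial vanishing on $V$ or on $W$ also vanishes on $V \cap W$, so $I(V) \cup I(W) \subseteq I(V \cap W)$; hence any vector $v$ annihilating the directional derivatives at $p$ of all polynomials in $I(V \cap W)$ in particular annihilates those in $I(V)$ and in $I(W)$, proving $T_p(V \cap W) \subseteq T_p V \cap T_p W$. For (5), I would simply invoke the classical fact that on an irreducible variety the Zariski tangent space has dimension at least $\dim V$ at every point (with equality exactly at smooth points); a reference such as \cite{harris:alg-geom} suffices. None of these items poses a real obstacle; the bookkeeping in (3) needs the most attention, and the only genuinely deep input is the tangent-space inequality (5), which we black-box.
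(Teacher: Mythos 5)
Your proofs are correct. The paper itself states this lemma without proof (it is quoted as a list of ``simple properties'' borrowed from Cohen--Moshkovitz), and your arguments are exactly the standard ones one would supply: items (1)--(3) via the associated tensor $\sum_i z^i M_i(x,y)$, monotonicity of slice rank under linear substitutions, and concatenation of rank-one decompositions; item (4) from the inclusions $I(V), I(W) \subseteq I(V\cap W)$ under the paper's ideal-theoretic definition of the tangent space; and item (5) by citing the classical lower bound $\dim T_pV \geq \dim V$ for irreducible varieties. Your preliminary remark that $\sr(L)$ is independent of the chosen basis of $L$ is a worthwhile addition that the paper glosses over.
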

    Using Lemma~\ref{lem:simple-stuff} and the main estimate $\sr(T_A M_r) \leq 2r$,
    \begin{equation*}
        \sr(P) = \sr(L \cap T_A M_r) \leq \sr(T_A M_r) \leq 2r.
    \end{equation*}
    We also have
    \begin{align*}
        \dim L_r &\leq \dim T_A V \\
        &\leq \dim T_A L_r \\
        &\leq \dim(T_A L \cap T_A M_r) \\
        &\leq \dim(L \cap T_A M_r) = \dim P.
    \end{align*}
    Thus we have $\codim_L P \leq \codim_L L_r$. Using Lemma~\ref{lem:simple-stuff} we get
    \begin{equation*}
        \sr(P^\perp) \leq \codim_L P \leq \codim_L L_r.
    \end{equation*}
    Hence
    \begin{align*}
        \sr(L) &= \sr(P + P^\perp) \\
        &\leq \sr(P) + \sr(P^\perp) \\
        &\leq 2r + \codim_L L_r.
    \end{align*}
\end{proof}

\subsection{Proof Sketch of Theorem~\ref{thm:ar-gr}}
    The key ingredient of the proof is the following generalization of the Schwarz-Zippel lemma. Define $V(\F) = V \cap \F^n$ for a variety $V \subset \overline\F^n$ defined over $\F$.
    \begin{lemma}\label{lem:schwarz-zippel}
    For any variety $V \subset \overline \F^n $ defined over any finite field $\F$ cut out by polynomials of degree at most $d$,
    \begin{equation*}
        \frac{|V(\F)|}{|\F|^n} \leq \left( \frac{d}{|\F|}\right)^{\codim V}.
    \end{equation*}
    \end{lemma}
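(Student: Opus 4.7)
The plan is to establish the bound by induction on $n$, the dimension of the ambient space. The base case $n = 1$ is the classical univariate Schwartz-Zippel statement: a proper subvariety of $\overline{\F}$ is cut out by a nonzero polynomial of degree at most $d$, hence has at most $d$ points in $\F$, matching the bound $d/|\F|$ when the codimension is $1$ (and being trivial when the codimension is $0$).

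For the inductive step, I would consider the projection $\pi: \overline{\F}^n \to \overline{\F}^{n-1}$ onto the first $n-1$ coordinates. For each $x' \in \overline{\F}^{n-1}$, the fiber $V_{x'} = \{a \in \overline{\F} : (x',a) \in V\}$ is obtained by specializing the defining polynomials of $V$ at $x'$, producing univariate polynomials in $x_n$ of degree at most $d$. Hence $V_{x'}$ is either all of $\overline{\F}$ (when every such specialization is identically zero) or a finite set of size at most $d$. The ``bad'' locus $B = \{x' : V_{x'} = \overline{\F}\} \subseteq \overline{\F}^{n-1}$ is itself a variety cut out by polynomials of degree at most $d$ in $n - 1$ variables, namely the coefficients of the defining polynomials of $V$ viewed as polynomials in $x_n$. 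Since $B \times \overline{\F} \subseteq V$, one has $\codim_{n-1} B \geq \codim_n V$. The decomposition
\[
|V(\F)| = |\F| \cdot |B(\F)| + \sum_{x' \in \F^{n-1} \setminus B(\F)} |V_{x'}(\F)| \leq |\F|\cdot |B(\F)| + d \cdot |\pi(V)(\F)|
\]
then reduces matters to applying the inductive hypothesis to the varieties $B$ and $\pi(V)$ in $\overline{\F}^{n-1}$. Here $\pi(V)$ is also cut out by polynomials of degree at most $d$ (obtained via elimination of $x_n$), and its codimension equals either $\codim V$ (if $V$ is ``cylindrical'' in the $x_n$-direction) or $\codim V - 1$ (otherwise).

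The main obstacle I expect is that this naive combination of the two contributions can be off by a small multiplicative factor of roughly $1 + d/|\F|$, so the induction does not close cleanly on the first try. Removing the extra factor requires either (a) a careful case analysis exploiting the fact that the codimension of $B$ in $\overline{\F}^{n-1}$ is \emph{strictly} greater than $\codim V$ whenever $V$ has a non-cylindrical component, so the two terms cannot simultaneously saturate the bound, or (b) a preliminary generic $\F$-linear change of coordinates arranging that $V$ has no cylindrical component in the chosen direction, whence $B$ is empty and the inequality $|V(\F)| \leq d \cdot |\pi(V)(\F)|$ closes the induction directly. Option (b) is available whenever $|\F|$ exceeds some threshold depending on $d$ and $n$; when $|\F| \leq d$, the right-hand side $(d/|\F|)^{\codim V}$ is already at least $1$, so the inequality is vacuously true, which is precisely the regime where the generic-coordinates argument could fail.
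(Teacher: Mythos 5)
The survey never actually proves Lemma~\ref{lem:schwarz-zippel} --- it is quoted from Cohen and Moshkovitz and used as a black box --- so your argument has to stand on its own, and it has a genuine gap at its load-bearing step: the claim that $\pi(V)$ (really, its Zariski closure, since the image of a projection is only constructible) is ``also cut out by polynomials of degree at most $d$, obtained via elimination of $x_n$.'' Elimination does not preserve degree bounds. For a concrete failure, take $V\subset\overline\F^3$ the intersection of two generic quadrics: it is a curve of degree $4$, and the closure of its projection to a generic plane is a quartic plane curve, which cannot be cut out by polynomials of degree $2$. In general the defining degree of the projected variety can grow like $d^{\codim V}$. Since your induction applies the hypothesis to $\pi(V)$ with the same parameter $d$, the induction on the ambient dimension does not close. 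This is precisely why the known proofs of this lemma induct on (or otherwise control) the \emph{degree of the variety} rather than the degree of its defining equations: the degree of a variety does not increase under linear projection, and one combines a Bezout-type bound (components of a variety cut out by degree-$\leq d$ polynomials have degree at most $d$ to the power of their codimension) with the classical fact that an irreducible variety $W$ defined over $\F$ has at most $\deg(W)\,\abs{\F}^{\dim W}$ rational points.

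Your own accounting worry is also worse than a $1+d/\abs{\F}$ factor, and the proposed repairs do not obviously work. With $\codim B\geq\codim V=c$ and $\codim\overline{\pi(V)}\geq c-1$, each of the two terms $\abs{\F}\cdot\abs{B(\F)}$ and $d\cdot\abs{\pi(V)(\F)}$ can separately be as large as $(d/\abs{\F})^{c}\abs{\F}^{n}$, so the naive bound loses a factor $2$. Fix (a) fails as stated because $V$ may simultaneously have a cylindrical top-dimensional component (forcing $\codim B=c$ exactly) and a non-cylindrical top-dimensional component (forcing $\codim\overline{\pi(V)}=c-1$), so both terms saturate at once. Fix (b) needs an $\F$-rational direction avoiding all lines contained in components of $V$, which requires $\abs{\F}$ large compared with the degrees of those components --- and these can be of size $d^{c}$, not $d$, so the ``vacuous when $\abs{\F}\leq d$'' escape hatch does not cover the problematic range; and even granting such a change of coordinates, you are still left bounding $\abs{\pi(V)(\F)}$ by induction, which runs into the degree-growth problem above. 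So the right move is to restructure the induction around the degree of $V$ itself rather than around the degree of its defining polynomials.
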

    We use Lemma \ref{lem:schwarz-zippel} to prove Theorem \ref{thm:ar-gr}.
    Consider $T \in \F^{n_1\times n_2\times n_3}$, where we interpret $T$ as its defining array. Set $V = \ker(T) \subset \overline \F^N$ with $N = n_1 + n_2$, where here we use the bilinear map formulation of $T$. By Lemma \ref{lem:schwarz-zippel},
    \begin{equation*}
        \frac{|V(\F)|}{|\F|^n} \leq \left( \frac{d}{|\F|}\right)^{\codim V},
    \end{equation*}
    since $T$ is bilinear.
    Thus
    \begin{align*}
        \ar(T) &= -\log_{|\F|} \frac{|V(\F)|}{|\F|^N} \\
        &\geq \codim V \cdot (1-\log_{|\F|} 2) \tag{Using Lemma~\ref{lem:schwarz-zippel}} \\
        &= \gr(T) (1-\log_{|\F|} 2).
    \end{align*}
    The first equality above is true since
    \begin{align*}
        \bias(T) &= \E_{(x_1,x_2)\in\F^{n_1}\times \F{n_2}} \left[ \E_{x_3 \in F^{n_3}}
        \chi(T(x_1,x_2,x_3))\right] \\
        &=
        \Pr_{(x_1,x_2)\in \F^{n_1} \times \F{n_2}} [T(x_1,x_2, \cdot) \equiv 0 ] \numberthis \label{eq:prob}\\
        &= \frac{|V(\F)|}{|\F|^N}.
    \end{align*}
    Equation~\ref{eq:prob} holds because the bias of a linear form is 0 unless the linear form is identically 0, in which case its bias is 1.
\qed

\section*{Acknowledgements}

We would like to thank Alex Iosevich and Kaave Hosseini for their invaluable discussions while exploring this material, and for their advice during the writing process. This work was done while both authors were undergraduate students at the University of Rochester.

\bibliographystyle{alpha}
\bibliography{thesisbib}

\end{document}